\documentclass[journal]{IEEEtran}

\IEEEoverridecommandlockouts

%
\usepackage{wrapfig}
\usepackage{amsthm}
\usepackage{mathrsfs}
\usepackage{graphicx}
\usepackage{graphics}
\usepackage{amssymb}
\usepackage{amsmath,mathtools}
\usepackage{color}
\usepackage{xspace}
\usepackage{algpseudocode}
\usepackage{bbm}
\usepackage{comment}
\usepackage{algorithmicx}
\usepackage[caption=false]{subfig}
\usepackage{psfrag}
\usepackage{url}
\usepackage{float}
\usepackage{soul}

\usepackage{rotating}
\usepackage{chngcntr}
\usepackage{apptools}
\AtAppendix{\counterwithin{thm}{section}}
\usepackage{graphicx}
\usepackage{graphics}
\usepackage{amssymb}
\usepackage{amsmath}
\usepackage{mathtools}
\usepackage{color}
\usepackage{xspace}
\usepackage{algpseudocode}
\usepackage{bbm}
\usepackage{comment}
\usepackage{algorithmicx}
\usepackage{subfig}
\usepackage{psfrag}
\usepackage{tabu}

\usepackage{amsmath}
\allowdisplaybreaks[4]
\usepackage{tikz}
\usetikzlibrary{shapes,arrows}
\usetikzlibrary{positioning}

\tikzstyle{block}=[draw opacity=0.7,line width=1.4cm]
\DeclareMathAlphabet{\mathpzc}{OT1}{pzc}{m}{it}
\definecolor{CranJ}{cmyk}{0,0.69,0.54,0.04} 
\definecolor{PinkJ}{cmyk}{0,0.71,0.43,0.12} 
\definecolor{Cran}{cmyk}{0,0.73,0.41,0.29} 
\definecolor{VRed}{cmyk}{0,0.75,0.25,0.2} 
\definecolor{ORed}{cmyk}{0,0.75,0.75,0} 
\definecolor{CBlue}{cmyk}{1,0.25,0,0} 

\title{\LARGE \bf 
Distributed Leader Following of an Active Leader for Linear Heterogeneous Multi-Agent Systems
}

\author{Yi-Fan Chung and Solmaz S. Kia %
  \thanks{The authors are with the Department of Mechanical and Aerospace Engineering, University of California Irvine, Irvine, CA 92697,  
    {\tt\small \{yfchung,solmaz\}@uci.edu}. This work is supported by NSF award IIS-SAS-1724331.}%
}

\newcommand{\ee}{\operatorname{e}}

\newcommand{\VV}{\mathcal{V}}
\newcommand{\EE}{\mathcal{E}}
\newcommand{\GG}{\mathcal{G}}


\newcommand{\real}{{\mathbb{R}}}
\newcommand{\reals}{{\mathbb{R}}}

\newcommand{\realpositive}{{\mathbb{R}}_{>0}}
\newcommand{\realnonnegative}{{\mathbb{R}}_{\ge 0}}

\newcommand{\argmin}{\operatorname{argmin}}


\newcommand{\dout}{\mathsf{d}_{\operatorname{out}}}

\newcommand{\until}[1]{\in\{1,\dots,#1\}}

\parindent 0pt
\parskip .9ex

\newcommand{\vect}[1]{\boldsymbol{\mathbf{#1}}}
\newcommand{\vectsf}[1]{\vect{\mathsf{#1}}}
\newcommand{\Bvect}[1]{\bar{\boldsymbol{\mathbf{#1}}}}

\newcommand{\dvect}[1]{\dot{\vect{#1}}}

\newcommand{\Diag}[1]{\operatorname{Diag}(#1)}

 \newcommand{\boxend}{\hfill \ensuremath{\Box}}

\newtheorem{thm}{Theorem}[section]

\newtheorem{rem}{Remark}[section]
\newtheorem{cor}{Corollary}[section]
\newtheorem{lem}{Lemma}[section]


\newcommand{\oprocendsymbol}{\hbox{$\bullet$}}
\newcommand{\oprocend}{\relax\ifmmode\else\unskip\hfill\fi\oprocendsymbol}



\begin{document}


\maketitle
\begin{abstract}
This paper considers a leader-following problem for a group of heterogeneous linear time invariant (LTI) followers that are interacting over a directed acyclic graph. Only a subset of the followers has access to the state of the leader in specific sampling times. The dynamics of the leader that generates its sampled states is unknown to the followers. For interaction topologies in which the leader is a global sink in the graph, we propose a distributed algorithm that allows the followers to arrive at the sampled state of the leader by the time the next sample arrives. Our algorithm is a practical solution for a leader-following problem when there is no information available about the state of the leader except its instantaneous value at the sampling times.  Our algorithm also allows the followers to track the sampled state of the leader with a locally chosen offset that can be time-varying. When the followers are mobile agents whose state or part of their state is their position vector, the offset mechanism can be used to enable the followers to form a transnational invariant formation about the sampled state of the leader.
We prove that the control input of the followers to take them from one sampled state to the next one is minimum energy. We also show in case of the homogeneous followers, after the first sampling epoch the states and inputs of all the followers are synchronized with each other. Numerical examples demonstrate our results.
\end{abstract}

\begin{IEEEkeywords}
 multi-agent systems, leader-following, synchronization, minimum energy control, specified time consensus 
\end{IEEEkeywords}

\section{Introduction}
Synchronization of multi-agent systems (MASs) is an important component of  many cooperative control problems, such as rendezvous~\cite{JL-ASM-BDOA:03}, formation control~\cite{WR:07}, flocking control~\cite{HS-XW-ZL:09}, containment control~\cite{YC-WR-ME:12} and sensor networks~\cite{ROS-JSS:05}. Synchronization problems can be roughly categorized into leaderless and leader-following. In the leaderless synchronization, which is closely related to the consensus problem, the agents aim to reach to a static or dynamic agreement on a common value~\cite{ROS-RMM:04,DVD-KJK:07,SSK-BVS-JC-RAF-KML-SM-arxiv}.
On the other hand, in the leader-following synchronization, agents aim to make the agreement on the states generated by a leader. In this paper, we focus on the design of a distributed leader-following algorithm when the only information available about the leader is its sampled state, which is only available to a subset of followers.

\emph{Literature review}:
The leader-following algorithms for single integrator and double integrator dynamics are presented in~\cite{WR-RWB:08},  and for homogeneous LTI systems 
are proposed in~\cite{WN-DC:10} and \cite{HZ-FLL-AD:11}.
For systems constituted of heterogeneous LTI followers, \cite{XW-YH:10} and \cite{YS-JH:12} propose the algorithms to synchronize with a passive zero-input LTI leader. 
\cite{YH-JH-LG:06} and \cite{YH-GC-LB:08} develop the controls for the single and double integral system, respectively, to track an active leader (active leader is a leader that has a control input).  But their works assume the leader's control input is available to all the followers.
\cite{LZ-RW:13} and \cite{YT-YH-XW:15} propose a leader-following algorithm respectively for homogeneous LTI and heterogeneous nonlinear MASs in which the unknown input of the leader is bounded and is not available to any follower. But the control inputs in \cite{LZ-RW:13} and \cite{YT-YH-XW:15} have the sliding mode structure and suffer from the well-known undesirable chattering behavior. We recall that from a practical perspective, chattering is undesirable and leads to excessive control energy expenditure~\cite{SlotineLi:91}. \cite{YY-MH:18} is the recent result for the leader-following problem, which is based on the result of \cite{LZ-RW:13} and develops a distributed observer to estimate the leader's state for each follower. Then, the output synchronization of heterogeneous leader-follower linear systems is achieved by optimal local tracking of the output of the observer. We note that in both~\cite{LZ-RW:13} and~\cite{YY-MH:18}, the active leader is restricted to be linear and have limited input. The work reviewed so far are all converge to leader following in an asymptotic manner, i.e., the settling time to reach an agreement is infinity. For fast convergence, \cite{LW-FX:10},\cite{SL-HD-XL:11} and \cite{YC-WR:12} propose the finite-time synchronization algorithms for single and double integral MASs, where the upper bound of the settling time explicitly depends on the initial state of the MAS. Therefore, to use these algorithms, the centralized knowledge of the initial state of the MAS is essential to estimate the settling time.  \cite{MD-AP-GD-MD-KV:15} and \cite{ZZ:15} propose the fixed-time synchronization algorithms, where the settling time is bounded and independent of the initial state of the MAS. However, for both these finite and fixed-time algorithms, the settling time is upper bounded by a conservative estimation. \cite{YZ-YL-GW-WR-GC:18} introduces the specified-time synchronization control for the leaderless MASs in which one can determine the settling time exactly in advance. Specified-time synchronization can be useful to the applications that require precise acting time, such as target attack at a specified time.

\emph{Statement of contributions}: 
In this paper, we consider a leader-following problem in which the only information available about the leader is its instantaneous sampled state that is known only to a subset of a group of heterogeneous LTI followers at the sampling times. We make no assumptions about the input of the leader or the structural form of its dynamics. That is, the state of the leader is perceived by the followers as an exogenous signal.  The sampled states of the leader can be the states of a physical system (e.g., in a pursuit-evasion problem) or a set of desired reference states of a virtual leader (e.g., in a waypoint tracking problem). Given the limited information about the leader, we seek a practical solution that enables the followers to arrive at the sampled state of the leader before the next sampling time. That is, we design a distributed algorithm that steers a group of heterogeneous LTI followers to be at the sampled states of the leader at finite time just before the next sampled state is obtained. We note that practical one step lagged tracking has also been used in~\cite{MZ-SM:10,EM-JIM-CS-SM:14,EM-JIM-CS-SM:14-2} for a set of dynamic average consensus algorithms with asymptotic tracking behavior.  Our solution is inspired by the minimum energy controller design~\cite{FL-DV-VS:12} in the classical optimal control theory, and is proposed for problems where the interaction topology of the followers plus the leader is an acyclic digraph with the leader as the global sink. Directed acyclic interaction topology can be interpreted as the agents only obtaining information from those in front of them (see, \cite{HGT-GJP-VK:04,DW-YG-LZ:10} for algorithms designed over acyclic graphs). Our algorithm also allows the followers to track the sampled state of the leader with a locally chosen offset, which can be time-varying. This offset, when the followers are mobile agents and their whole state or part of it is the position vector, can be used to enable the followers to form a transnational invariant formation~\cite{MM-EM::10} about the sampled state of the leader.  For a special class of non-homogeneous LTI MAS, we show that our results can be extended to solve a leader-following problem where we want only an output of the followers to follow the leader's sampled state. Finally, we show that if the followers are homogeneous, our algorithm not only results in a leader following behavior, but also it makes the states and inputs of the followers fully synchronized after the first sampling epoch. We demonstrate our leader-following results via three numerical examples. In the first example, we show the application of our leader-following algorithm in following a nonlinear mass-spring-damper leader under a specific formation structure for a group of heterogeneous linear mass-spring-damper systems. In the second example, we demonstrate the use of our algorithm for reference state tracking via a group of second order integrator followers with bounded control. The result shows the synchronization of the homogeneous followers is realized. Moreover, using the intrinsic properties of our leader following algorithm, we show that the arrival times at the reference states can be specified in such a way that the inputs of the followers stay within the saturation bounds. Our last example demonstrates an output-tracking scenario for a group of aircraft.

\emph{Organization}:
The rest of this parer is organized as follows. Section~\ref{se::notations} gathers basic notation and graph-theoretic notions. Section~\ref{sec::prob_def} gives our problem definition and objective statement. Section~\ref{sec::main} proposes our distributed leader-following algorithm. In Section~\ref{sec::demo}, three applications are demonstrated.  Section~\ref{sec::conclusion} concludes the results of this paper. Appendix A contains the proof of our main result, Theorem~\ref{thm::main}. Finally, Appendix B presents an auxiliary result, which is invoke to support the feasibility of the sampling time design in our second numerical example.

\section{Notations}\label{se::notations}

\emph{Notation}: We let $\reals$, $\realpositive$, $\realnonnegative$, $\mathbb{Z}$, and $\mathbb{Z}_{\geq 0}$
denote the set of real, positive real, non-negative real, integer, and non-negative integer numbers, respectively.
The transpose of a matrix $\vect{A}\in\real^{n\times m}$ is~$\vect{A}^\top$.

\emph{Graph theoretic notations and definitions}: Here we review our graph related notations and relevant definitions and concepts from graph
theory following~\cite{FB-JC-SM:09}. A \emph{digraph}, is a triplet $\GG = (\VV ,\EE,\vect{\mathsf{A}})$,~where $\VV=\{1,\dots,N\}$ is the \emph{node set} and
$\EE \subseteq \VV\times \VV$ is the \emph{edge set}, and $\vect{\mathsf{A}}=[\mathsf{a}_{ij}]\in\real^{N\times N}$ is the \emph{adjacency}
matrix of the graph defined according to $\mathsf{a}_{ij} =1$ if $(i, j) \in\EE$ and $
\mathsf{a}_{ij} = 0$, otherwise.  An edge $(i, j)$ from
$i$ to $j$ means that agent $j$ can send
information to agent $i$. Here,  $i$ is called an
\emph{in-neighbor} of $j$ and $j$ is called an \emph{out-neighbor}
of~$i$.  
A \emph{directed path} is a sequence of nodes
connected by edges. A directed path that starts and ends at the same node and all other nodes on the path are distinct is called a \emph{cycle}. A digraph without cycles is called \emph{directed acyclic graph}. The
\emph{out-degree} of a node $i$ is
 $\dout^i=\Sigma^N_{j =1} \mathsf{a}_{ij}$. The out-degree matrix of a graph is $\vectsf{D}_{\text{out}}=\Diag{\dout^1,\dout^2,\cdots,\dout^N}$.
We denote the set of in-neighbors of an agent $i$ by $\mathcal{N}^i_{\text{in}}$ and the out-neighbors of agent $i$ by $\mathcal{N}^i_{\text{out}}$. A node $i \in \VV$ is called a \emph{global sink} of $\GG$ if it outdegree $\dout^i=0$ and for every node $j \in \VV$ there is at least a path from $j$ to $i$.

\begin{figure}
    \centering
     \begin{tikzpicture}[auto,thick,scale=0.5, every node/.style={scale=0.67}]
\tikzset{edge/.style = {->,> = latex'}}
el/.style = {inner sep=2pt, align=left, sloped},
\node (0) at (0,0) [draw, minimum size=15pt,color=blue, circle, very thick,fill=red!30] {{\small \textbf{0}}};

\node (1) at (-1.5,1.5) [draw, minimum size=15pt,color=blue, circle, very thick] {{\small \textbf{1}}};

\node (2) at (-1.5,-1.5) [draw, minimum size=15pt,color=blue, circle, very thick] {{\small \textbf{2}}};

\node (3) at (-3,0) [draw, minimum size=15pt,color=blue, circle, very thick] {{\small \textbf{3}}};

\node (4) at (-4.25,0.5) [draw, minimum size=15pt,color=blue, circle, very thick] {{\small \textbf{4}}};

\node (5) at (-4.5,-0.5) [draw, minimum size=15pt,color=blue, circle, very thick] {{\small \textbf{5}}};

\node (6) at (-6,1.5) [draw, minimum size=15pt,color=blue, circle, very thick] {{\small \textbf{6}}};

\node (7) at (-6,-1.5) [draw, minimum size=15pt,color=blue, circle, very thick] {{\small \textbf{7}}};

\draw[edge,dashed]  (1)to  (0);
\draw[edge,dashed]  (2)to  (0);
\draw[edge,dashed]  (3)to  (0);
\draw[edge]  (3)to  (1);
\draw[edge]  (3)to  (2);
\draw[edge]  (4)to  (1);
\draw[edge]  (5)to  (3);
\draw[edge]  (6)to  (1);
\draw[edge]  (6)to  (4);
\draw[edge]  (7)to  (5);
\draw[edge]  (7)to  (2);

\end{tikzpicture}
    \caption{A leader-follower network. The interaction topology of the follower agents, $\mathcal{G}$, shown via the network with solid edges, is an acyclic digraph. Agent $0$ is the leader. The edges of $\mathcal{G}_l$ is shown by the dashed arrow. Here, the leader is the global sink of the $\GG\cup\GG_l$, therefore, its information reaches all the agents in an explicit or implicit manner. }
    \label{fig::DAC}
\end{figure}
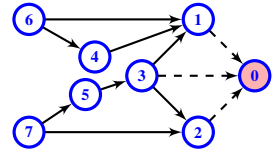

\section{Problem definition}\label{sec::prob_def}
We consider a group of $N$ heterogeneous MAS whose dynamics is described~by
\begin{align}\label{eq::agent_dyn}
    \dvect{x}^i(t)=\vect{A}^i\,\vect{x}^i(t)+\vect{B}^i\,\vect{u}^i(t),\quad i\in\{1,\cdots,N\},
\end{align}
where $\vect{x}^i\in\real^n$ is the state vector and $\vect{u}^i\in\real^{m^i}$ is the control vector. Throughout the paper we assume that the agents' dynamics~\eqref{eq::agent_dyn} is controllable, i.e.,  $(\vect{A}^i,\vect{B}^i)$ for $i\until{N}$ is controllable. These agents (referred hereafter as followers) aim to follow a dynamic signal $\vect{x}^0(t):\real_{\geq0}\to\real^n$ with possibly a locally chosen offset. This signal can be a dynamic reference signal of a virtual leader or the state of an active physical leader with (possibly) a nonlinear dynamics, e.g., $ \dvect{x}^0(t)=f^0(\vect{x}^0(t),\vect{u}^0(t),t)$. The dynamical model and the input $\vect{u}^0\in\real^{m^0}$ of the leader is not known to the followers. The interaction topology between the followers is described by a acyclic digraph, denoted by $\mathcal{G}$.
Only a subset of followers in $\GG$, denoted by $\mathcal{N}_{\text{in}}^0\neq\{\}$, has access to $\vect{x}^0(t)$ at the sampling times $t_k\in\real$, $k\in\mathbb{Z}_{\geq0}$. Throughout the paper we assume that $T_k=t_{k+1}-t_k\in\real_{>0}$ for any $k\in\mathbb{Z}_{\geq0}$ with $t_0=0$. We let $\mathcal{G}_l$ be the digraph consisted of the leader and $\mathcal{N}_{\text{in}}^0$ and the directed edges connecting $\mathcal{N}_{\text{in}}^0$ to the leader. In what follows, we assume that the leader is the global sink of $\overline{\GG}=\GG\cup\GG_l$, so that its information reaches all the agents in an explicit or implicit manner (see Fig.~\ref{fig::DAC} for an example). We let $\overline{\mathcal{N}}_{\text{out}}^{\,i}$ be the set of the out-neighbors of agent $i\in\{0,1,\cdots,N\}$ in graph $\overline{\GG}$; we note the $\overline{\mathcal{N}}_{\text{out}}^0=\{\}$. Finally, we call the followers homogeneous if $(\vect{A}^i,\vect{B}^i)=(\vect{A},\vect{B})$, for $i \until{N}$.

Give that we only have a limited information about the leader (only the sampled states of the leader $\vect{x}^0(t_k)$ is available), we seek a practical solution that enables the followers to arrive at the sampled state of the leader before the next sampling time. Therefore, our objective in this paper is to design a distributed control rule for the input vector $\vect{u}^i(t)$ of each follower $i\in\{1,\cdots,N\}$ such that 
\begin{align}\label{eq::objective}
    \vect{x}^i(t_{k+1})=\vect{x}^0(t_k)-\vect{F}^{i0}(t_k),\quad i\in\{1,\cdots,N\}.
\end{align}
That is, the follower $i\until{N}$ can steer itself to be in $\vect{F}^{i0}(t_k) \in \real^n$ offset with respect to the state $\vect{x}^0(t_k)$ of the leader in time before the next sampling time $t_{k+1}$. We note that the set of offsets $\{\vect{F}^{i0}(t_k)\}_{i=1}^N$, when it is related to the position offsets of the agents, defines the formation of the followers around the leader. Here, the term formation refers to \emph{transnational invariant} formation~\cite[Section 6.1.1]{MM-EM::10}. For scenarios where the objective is to synchronize to the state of the leader, $\vect{F}^{i0}(t_k)$ is set to zero for all $i\in\{1,\cdots,N\}$. To form the offset, we assume that at each sampling time $t_k$, follower $i\in\{1,\cdots,N\}$ knows  $\vect{F}^{ij}(t_k)=\vect{F}^{i0}(t_k)-\vect{F}^{j0}(t_k)$ for $j\in\overline{\mathcal{N}}_{\text{out}}^{\,i}$; either the follower is given  $\vect{F}^{ij}(t_k)$ with respect to its out-neighbor $j$ or constructs it locally after agent $j$ sends its $\vect{F}^{j0}(t_k)$ to agent $i$. We note that if the leader is a global sink of $\overline{\GG}$, given   $\vect{x}^0(t_k)$ and a set of $\vect{F}^{ij}(t_k)$, $i\until{N}$ and $j\in\overline{\mathcal{N}}_{\text{out}}^{\,i}$, we can show that the state offset $\vect{F}^{i0}(t_k)$ for follower $i$ with respect to the leader is unique.

\section{Main result}\label{sec::main}
In this section, we develop a novel distributed solution to solve the leader-following problem stated in Section~\ref{sec::prob_def}. To present this result, we recall that 
\begin{align}\label{eq::G}
    \vect{G}(t)=&\int_{0}^{t} {\ee^{\vect{A}(t- \tau )}\vect{B}\vect{B}^\top\ee^{\vect{A}^\top(t-\tau)}}d\tau,
\end{align}
is the controllability Gramian of $(\vect{A},\vect{B})$ for any finite time $t\in\real_{>0}$. Since $(\vect{A},\vect{B})$ is controllable, $\vect{G}(t)$ is full rank and invertible at each time $t\in\real_{>0}$. We start by using a classical optimal control result to make the following statement. 

\begin{lem}\label{lem::classical}
\rm Consider a leader-following with an offset problem where each follower's
dynamics is given by~\eqref{eq::agent_dyn} with $(\vect{A}^i,\vect{B}^i)$ controllable. Suppose $i$ is a follower in $\GG$ that has access to $\vect{x}^0(t)$ of the leader at each sampling time $t_k$, $k\in\mathbb{Z}_{\geq0}$, i.e., $i\in\mathcal{N}^0_{\text{in}}$. Also, $\vect{F}^{i0}(t_k) \in \real^n$ is the desired state offset with respect to $\vect{x}^0(t_k)$. Starting at an initial condition $\vect{x}^i(t_0)\in\real^n$ with $\vect{u}^i(t_0)=\vect{0}$, for any $i\in\mathcal{N}^0_{\text{in}}$ let 
\begin{align}\label{eq::ctrlA1}
      \vect{u}^i(t)\!\!=\vect{B}^{i^\top}\! \!\!\ee^{\vect{A}^{i^\top}\!\!(t_{k+1}-t)}\vect{G}_k^{i^{-1}}\!\!(\vect{x}^0(t_k)\!\!-\vect{F}^{i0}(t_k)\!\!-\ee^{\vect{A}^i T_k}\vect{x}^{i}(t_k)),
    \end{align} for $t\in(t_k,t_{k+1}]$, where
    where $T_k=t_{k+1}-t_k\in\real_{>0}$, and 
\begin{align}\label{eq::G_k}
    \vect{G}^i_k=\vect{G}^i(T_k)=&\int_{0}^{T_k} {\ee^{\vect{A}^i(T_k- \tau )}\vect{B}^i\vect{B}^{i^\top}\ee^{\vect{A}^{i^\top}(T_k-\tau)}}d\tau.
\end{align}
Then, for every $i\in\mathcal{N}^0_{\text{in}}$ we have $\vect{x}^i(t_{k+1})=\vect{x}^0(t_k)-\vect{F}^{i0}(t_k)$ for all $k\in\mathbb{Z}_{\geq0}$. Moreover, at each time $t\in[t_k,t_{k+1}]$, the control input $\vect{u}^i(t)$ of $i\in\mathcal{N}^0_{\text{in}}$ satisfies
\begin{subequations}\label{eq::optimal_control}
\begin{align}
    \vect{u}^i(t)=&\argmin \int_{t_k}^{t_{k+1}}\vect{u}^i(\tau)^\top\vect{u}^i(\tau)\text{d}\tau,\quad \text{subject to}\\
    &\dvect{x}^i(t)=\vect{A}^i\,\vect{x}^i(t)+\vect{B}^i\,\vect{u}^i(t),\\
    &\vect{x}^i(t_k)=\vect{x}^i(t_k),~~ \vect{x}^i(t_{k+1})=\vect{x}^0(t_k)-\vect{F}^{i0}(t_k).
\end{align}
\end{subequations}
\end{lem}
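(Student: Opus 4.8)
The plan is to split the argument into two independent parts: first I would verify that the control law~\eqref{eq::ctrlA1} drives the agent to the prescribed terminal state, and then I would show that this same control is the minimizer of~\eqref{eq::optimal_control}. No graph structure enters here, so the whole statement is a single-agent two-point boundary value argument applied at each sampling epoch $k$ for every $i\in\mathcal{N}^0_{\text{in}}$.

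For the reachability part, I would start from the variation-of-constants formula for~\eqref{eq::agent_dyn} on $[t_k,t_{k+1}]$,
\[
\vect{x}^i(t_{k+1}) = \ee^{\vect{A}^i T_k}\vect{x}^i(t_k) + \int_{t_k}^{t_{k+1}} \ee^{\vect{A}^i(t_{k+1}-\tau)}\vect{B}^i \vect{u}^i(\tau)\,\text{d}\tau.
\]
Substituting~\eqref{eq::ctrlA1} and noting that the terminal vector $\vect{x}^0(t_k)-\vect{F}^{i0}(t_k)-\ee^{\vect{A}^i T_k}\vect{x}^i(t_k)$ and the matrix $\vect{G}_k^{i^{-1}}$ are constant over the integration interval, I would pull them outside the integral. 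The change of variables $s=\tau-t_k$ turns the remaining matrix integral into precisely the Gramian $\vect{G}_k^i$ of~\eqref{eq::G_k}, so the integral collapses to $\vect{G}_k^i\,\vect{G}_k^{i^{-1}}(\vect{x}^0(t_k)-\vect{F}^{i0}(t_k)-\ee^{\vect{A}^i T_k}\vect{x}^i(t_k))$. The two $\ee^{\vect{A}^i T_k}\vect{x}^i(t_k)$ terms then cancel and the claim $\vect{x}^i(t_{k+1})=\vect{x}^0(t_k)-\vect{F}^{i0}(t_k)$ follows. This step uses only the invertibility of $\vect{G}_k^i$, guaranteed by controllability of $(\vect{A}^i,\vect{B}^i)$ as recalled after~\eqref{eq::G}.

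For the optimality part, I would use the orthogonal-decomposition argument. I would first observe that the proposed control has the costate form $\vect{u}^i(\tau)=\vect{B}^{i\top}\ee^{\vect{A}^{i\top}(t_{k+1}-\tau)}\vect{p}$ for the constant vector $\vect{p}=\vect{G}_k^{i^{-1}}(\vect{x}^0(t_k)-\vect{F}^{i0}(t_k)-\ee^{\vect{A}^i T_k}\vect{x}^i(t_k))$. Let $\vect{v}$ be any other admissible control achieving the same transfer; subtracting the two transfer formulas gives $\int_{t_k}^{t_{k+1}} \ee^{\vect{A}^i(t_{k+1}-\tau)}\vect{B}^i(\vect{v}(\tau)-\vect{u}^i(\tau))\,\text{d}\tau=\vect{0}$. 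Left-multiplying this identity by $\vect{p}^\top$ and recognizing the integrand shows the cross term $\int_{t_k}^{t_{k+1}}\vect{u}^i(\tau)^\top(\vect{v}(\tau)-\vect{u}^i(\tau))\,\text{d}\tau$ vanishes. Expanding $\int \vect{v}^\top\vect{v}$ about $\vect{u}^i$ and using the vanishing cross term yields $\int \vect{v}^\top\vect{v}=\int \vect{u}^{i\top}\vect{u}^i+\int\|\vect{v}-\vect{u}^i\|^2\geq \int \vect{u}^{i\top}\vect{u}^i$, with equality only when $\vect{v}=\vect{u}^i$, which establishes~\eqref{eq::optimal_control}.

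I do not expect a genuine obstacle, as this is the classical minimum-energy reachability result; the only care needed is the bookkeeping in the change of variables that identifies the matrix integral with~\eqref{eq::G_k}, and the observation that the proposed input has exactly the costate structure that makes the cross term annihilate the difference of any two admissible controls.
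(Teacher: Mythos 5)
Your proposal is correct and is precisely the classical finite-time minimum-energy argument that the paper invokes by citation (the variation-of-constants computation collapsing the integral to the Gramian $\vect{G}^i_k$, followed by the orthogonality argument showing any other admissible control has strictly larger energy). You have simply written out in full the standard proof that the paper delegates to its reference, so there is no substantive difference in approach.
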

\begin{proof}
The proof follows from the classical finite time minimum energy optimal control design~\cite[page 138]{FL-DV-VS:12}.
\end{proof}

Lemma~\ref{lem::classical} essentially states that any follower that samples the leader, in the inter-sampling time interval can use the classical minimum energy control to steer towards the latest sampled state of the leader (with an offset if specified). Next, we show that this idea can be extended to a distributed setting in which only a subset of the followers have access to the leader's sampled state. To present our results we first introduce some notations. We denote the adjacency matrix and out-degree matrix of the followers' interaction topology $\mathcal{G}$, respectively, by $\vectsf{A}=[\mathsf{a}_{ij}]$ and  $\vectsf{D}_{\text{out}}=\Diag{\dout^1,\dout^2,\cdots,\dout^N}$. We let 
\begin{align} 
    \mathit{1}^i=\begin{cases}
    1,&i\in\mathcal{N}^0_{\text{in}},\\
    0,&\text{otherwise},
    \end{cases}
\end{align}
be the indicator operator that defines the state of connectivity of follower $i$ to the leader. For $i\until{N}$, we also define
\begin{subequations}
\begin{align}
&\vect{P}^i(t)=\begin{cases}\vect{0}&~~t=t_k,\\
\overline{\vect{G}}^{\,i^{-1}}_k(t)&~~t\in(t_k,t_{k+1}],\end{cases}\quad\quad \text{where}\label{eq::P}\\
  &  \overline{\vect{G}}^{\,i}_k(t)=
    \int_{t_k}^t {\ee^{\vect{A}^i(t - \tau )}\vect{B}^i\vect{B}^{i^\top} \ee^{\vect{A}^{i^\top}(t_{k+1}-\tau)}}d\tau, \quad \!\!
     t \in [t_k,t_{k+1}].\label{eq::barG_k}
\end{align}
\end{subequations}
We notice that $\overline{\vect{G}}^{\,i}_k(t)=\vect{G}^i(t-t_k)\,\ee^{\vect{A}^{i^\top} (t_{k+1}-t)}$
, where $\vect{G}^i$ is the controllability Gramian~\eqref{eq::G}. Therefore at each finite time $t\in(t_k,t_{k+1}]$, by virtue of controllability of $(\vect{A}^i,\vect{B}^i)$, $\overline{\vect{G}}^{\,i}_k(t)$ is invertible. Moreover, note that using the classical control results we can show that $\overline{\vect{G}}^{\,i}_k(t)$ can be computed numerically from 
$\overline{\vect{G}}^{\,i}_k(t)=\vect{W}^i(t)\vect{\Phi}^i(t)$ where $\vect{W}^i(t)=\vect{G}^i(t-t_k)$ and $\vect{\Phi}^i(t)=\ee^{\vect{A}^{i^\top}(t_{k+1}-t)}$ for $t\in[t_k,t_{k+1}]$ are obtained from
\begin{align*}
      \dot{\vect{W}}^i(t)&=\vect{A}^i\vect{W}^i(t)+\vect{W}^i(t)\vect{A}^{i^\top}\!\!+\vect{B}^i\vect{B}^{i^\top}\!\!, \,\,\,\,\vect{W}^i(t_k)=\vect{0}_{n\times n},\\
      \dot{\vect{\Phi}}^i(t)&=-\vect{A}^{i^\top}\vect{\Phi}^i(t),\,\qquad\qquad\qquad\qquad\,\,\vect{\Phi}^i(t_k)=\ee^{\vect{A}^{i^\top} \!\!T_k}.
\end{align*}

With the proper notations at hand, we present our distributed solution to solve our leader-following problem of interest as follows.

\begin{thm}[A leader-following algorithm for a group of heterogeneous LTI followers]\label{thm::main}
{\rm
Consider a leader-following problem where the followers' dynamics are given by ~\eqref{eq::agent_dyn}. Suppose the leader's time-varying state is $\vect{x}^0:\real_{\geq0}\to\real^n$.
Let the network topology $\overline{\mathcal{G}}=\mathcal{G}\cup\mathcal{G}_l$ be an acyclic digraph with leader, node $0$, as the global sink. Suppose 
every follower $i\in\mathcal{N}^0_{\text{in}}$ has access to $\vect{x}^0(t)$ at each sampling time $t_k$, $k\in\mathbb{Z}_{\geq0}$. Let $\vect{F}^{i0}(t_k) \in \real^n$ and $\vect{F}^{ij}(t_k) \in \real^n$ be the desired state offset (formation) with reference to $\vect{x}^0(t_k)$ and $\vect{x}^j(t_{k+1})$, respectively. 
Starting at an initial condition $\vect{x}^i(t_0)\in\real^n$ with $\vect{u}^i(t_0)=\vect{0}$, let for $t\in(t_k,t_{k+1}]$
\begin{align}\label{eq::ctrlA2}
    \vect{u}^i(t)=&\,\omega_l\Big(\vect{B}^{i^\top} \ee^{\vect{A}^{i^\top}(t_{k+1}-t)}\vect{G}_k^{i^{-1}}\times\nonumber\\
    &\qquad\qquad\qquad(\vect{x}^0(t_k)-\vect{F}^{i0}(t_k)-\ee^{\vect{A}^i T_k}\vect{x}^{i}(t_k))\Big)+\nonumber\\
    &\, \omega_f\Big(\vect{B}^{i^\top} \ee^{\vect{A}^{i^\top}(t_{k+1}-t)}{\vect{G}}_k^{i^{-1}}\times\nonumber\\ &\qquad\qquad\sum\limits_{j = 1}^N \mathsf{a}_{ij}\vect{G}^j_k\vect{P}^j(t)(\vect{x}^j(t)-\ee^{\vect{A}^j(t-t_k)}\vect{x}^j(t_k))\nonumber\\
    & \!\!\!\!\!\!+\vect{B}^{i^\top} \ee^{\vect{A}^{i^\top}(t_{k+1}-t)}\vect{G}_k^{i^{-1}}\times\nonumber\\
    &\quad\sum\limits_{j = 1}^N {\mathsf{a}_{ij}(\ee^{\vect{A}^j T_k}\vect{x}^j(t_k)-\ee^{\vect{A}^i T_k}\vect{x}^{i}(t_k)-\vect{F}^{ij}(t_k))}\Big),
\end{align}
where $\vect{P}^j(t)$ is given in~\eqref{eq::P}, $\omega_l^i=\frac{\mathit{1}^i}{\mathit{1}^i+\dout^i}$, and $\omega_f^i=\frac{1}{\mathit{1}^i+\dout^i}$. Then, the followings hold for $t\in\real_{\geq0}$ and $k\in\mathbb{Z}_{\geq0}$:
\begin{itemize} \setlength\itemsep{1em}
    \item[(a)] $\vect{x}^i(t_{k+1})=\vect{x}^0(t_k)-\vect{F}^{i0}(t_k)$, moreover, $\vect{x}^j(t_{k+1})-\vect{x}^i(t_{k+1})=\vect{F}^{ij}(t_k)$ $i,j\until{N}$ and $i\neq j$; 
    \item[(b)] the trajectory of every follower $i\until{N}$ is
    \begin{align}\label{eq::agent_traj}
    \vect{x}^i(t)&=\ee^{\vect{A}^i(t-t_k)}\vect{x}^i(t_k)+\\
    &\quad\overline{\vect{G}}\,^i_k(t)\vect{G}_k^{i^{-1}}(\vect{x}^0(t_k)\!-\!\vect{F}^{i0}(t_k)\!-\!\ee^{\vect{A}^iT_k}\vect{x}^{i}(t_k));\nonumber
        \end{align}
        \item[(c)] the control input $\vect{u}^i(t)$ of every agent $i\until{N}$ is equal to~\eqref{eq::ctrlA1}.\boxend
\end{itemize}
}
\end{thm}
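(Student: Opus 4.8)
My plan is to prove the whole theorem by reducing the distributed law (eq::ctrlA2) to the classical minimum-energy control (eq::ctrlA1): once part (c) is in hand, parts (b) and (a) follow at once from the minimum-energy trajectory and its terminal value, so (c) carries the real content. To streamline the bookkeeping I would write $\vect{M}^i(t)=\vect{B}^{i^\top}\ee^{\vect{A}^{i^\top}(t_{k+1}-t)}\vect{G}_k^{i^{-1}}$ and $\vect{d}^i=\vect{x}^0(t_k)-\vect{F}^{i0}(t_k)-\ee^{\vect{A}^iT_k}\vect{x}^i(t_k)$, so that (eq::ctrlA1) reads simply $\vect{u}^i(t)=\vect{M}^i(t)\vect{d}^i$. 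The argument is an induction over the followers, ordered by a linear extension of the DAG $\mathcal{G}$ in which every follower is listed after all of its out-neighbors. Acyclicity of $\overline{\mathcal{G}}$ guarantees such an ordering exists and, crucially, makes the induction well-founded: an out-neighbor lies strictly closer to the sink, so no circular dependence arises.

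For the base case I would take any follower with $\dout^i=0$. Since the leader is the global sink, such a follower must have direct access to the leader, so $\mathit{1}^i=1$, the follower summation in (eq::ctrlA2) is empty, and $\omega_l^i=1$, leaving $\vect{u}^i(t)=\vect{M}^i(t)\vect{d}^i$. For a general follower the inductive hypothesis is that each out-neighbor $j$ already obeys the trajectory formula (eq::agent_traj), i.e. $\vect{x}^j(t)-\ee^{\vect{A}^j(t-t_k)}\vect{x}^j(t_k)=\overline{\vect{G}}_k^{\,j}(t)\vect{G}_k^{j^{-1}}\vect{d}^j$ for $t\in(t_k,t_{k+1}]$.

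The crux is the cancellation inside the $\omega_f^i$ block. Using $\vect{P}^j(t)=\overline{\vect{G}}_k^{\,j^{-1}}(t)$ on $(t_k,t_{k+1}]$ together with the inductive hypothesis yields the key identity $\vect{G}_k^j\vect{P}^j(t)\big(\vect{x}^j(t)-\ee^{\vect{A}^j(t-t_k)}\vect{x}^j(t_k)\big)=\vect{d}^j$, which recovers from $j$'s live state exactly the leader-information $\vect{d}^j$ that $j$ is itself tracking. Substituting this into the two follower summations, the free-response terms $\ee^{\vect{A}^jT_k}\vect{x}^j(t_k)$ telescope away, and the offset-consistency relation $\vect{F}^{ij}(t_k)=\vect{F}^{i0}(t_k)-\vect{F}^{j0}(t_k)$ collapses the remaining offset terms, so that every summand reduces to $\vect{d}^i$ and the whole block becomes $\omega_f^i\,\dout^i\,\vect{M}^i(t)\vect{d}^i$. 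Adding the leader term and invoking the weight identity $\omega_l^i+\omega_f^i\dout^i=(\mathit{1}^i+\dout^i)/(\mathit{1}^i+\dout^i)=1$ gives $\vect{u}^i(t)=\vect{M}^i(t)\vect{d}^i$, completing the induction and establishing (c).

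With (c) secured, I would obtain (b) from the variation-of-constants formula: integrating $\vect{u}^i=\vect{M}^i\vect{d}^i$ against $\ee^{\vect{A}^i(t-\tau)}\vect{B}^i$ reproduces $\overline{\vect{G}}_k^{\,i}(t)\vect{G}_k^{i^{-1}}\vect{d}^i$ by the definition (eq::barG_k), which is precisely (eq::agent_traj); and (a) follows by evaluating at $t=t_{k+1}$, where $\overline{\vect{G}}_k^{\,i}(t_{k+1})=\vect{G}_k^i$ forces $\vect{x}^i(t_{k+1})=\vect{x}^0(t_k)-\vect{F}^{i0}(t_k)$, whence $\vect{x}^j(t_{k+1})-\vect{x}^i(t_{k+1})=\vect{F}^{ij}(t_k)$. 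Because the per-interval argument relies only on the states at $t_k$ as (arbitrary) initial data, it applies verbatim for every $k\in\mathbb{Z}_{\geq0}$. The step I expect to be the main obstacle is the cancellation of the third paragraph: verifying the $\vect{G}_k^j\vect{P}^j$ identity and confirming that the DAG ordering renders the induction non-circular; the remaining manipulations are routine linear-systems computations.
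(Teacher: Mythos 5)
Your proposal is correct and follows essentially the same route as the paper's proof: an induction along a topological ordering of the acyclic digraph (the paper phrases this as hierarchical subsets $\mathcal{V}_1,\dots,\mathcal{V}_m$), with the same key cancellation $\vect{G}_k^j\vect{P}^j(t)\big(\vect{x}^j(t)-\ee^{\vect{A}^j(t-t_k)}\vect{x}^j(t_k)\big)=\vect{x}^0(t_k)-\vect{F}^{j0}(t_k)-\ee^{\vect{A}^jT_k}\vect{x}^j(t_k)$ and the weight identity $\omega_l^i+\omega_f^i\dout^i=1$. The only cosmetic difference is that you establish (c) first and deduce (b) and (a) from it, whereas the paper substitutes the control into the variation-of-constants formula to get (b) first and verifies (c) afterwards by the same cancellation.
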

The proof of Theorem~\ref{thm::main} is given in the appendix. Several observations and remarks are in order regarding the leader-following algorithm of  Theorem~\ref{thm::main}.

\begin{rem}[Robustness to state perturbations]\label{rem::robustness}
{\rm We observe that the leader-following algorithm of Theorem~\ref{thm::main} has robustness to state perturbations similar to the well-known Model Predictive Control (MPC). Even though the controller implemented in each epoch $(t_k,t_{k+1}]$ is an open-loop control, since every follower exerts its state at time $t_k$ as initial condition to the controller, the algorithm can account for the slight perturbations in the follower final state $\vect{x}^i(t_{k+1})$ at the end of each epoch.} \boxend
\end{rem}

\begin{rem}[Implementation of control law~\eqref{eq::ctrlA2}]{\rm
To implement ~\eqref{eq::ctrlA2}, we note that the component of ~\eqref{eq::ctrlA2} that multiplies $\omega_l^i$ is computed using the local variables of follower $i$ and the sampled state of the leader if $\omega_l^i$ is non-zero, i.e., $i\in\mathcal{N}^0_{\text{in}}$. The component of ~\eqref{eq::ctrlA2} that multiplies $\omega_f^i$ is computed using the local variables of follower $i$ and variables of its out-neighbors if $\dout^i\neq0$. To compute this term, if the follower $i$ knows  $(\vect{A}^j,\vect{B}^j)$ of its out-neighbor $j$ (which is the case e.g., when the group is homogeneous), it can implement control ~\eqref{eq::ctrlA2} by obtaining the state $\vect{x}^j$ of its out-neighbor $j$ and computing $\vect{P}^j$ and $\ee^{\vect{A}^j(t-t_k)}$ locally. Otherwise, each follower needs to obtain $\vect{z}^j(t)=\vect{G}^j_k\vect{P}^j(t)(\vect{x}^j(t)-\ee^{\vect{A}^j(t-t_k)}\vect{x}^j(t_k))$ and $\ee^{\vect{A}^jT_k}\vect{x}^j(t_k)$ of its each out-neighbor $j$. We note here that since the interval $(t_k,t_{k+1}]$ is open from the left and the dynamics of all the followers is controllable,  $\vect{P}^i(t)$ is well defined. However, for  $t\to t_k^{+}$ from the right, $\vect{P}^i(t)$ goes to infinity. But, since $(\vect{x}^j(t)-\ee^{\vect{A}^j(t-t_k)}\vect{x}^j(t_k))$ goes to zero as $t\to t_k^+$ from the right, the product $\vect{P}^j(t)(\vect{x}^j(t)-\ee^{\vect{A}^j(t-t_k)}\vect{x}^j(t_k))$ goes to  zero as $t\to t_k^{+}$ from the right. The ``high-gain-challenge" observed here is often a common feature in any approach that is geared towards regulation in prescribed finite time. For example, finite-horizon optimal controls with a terminal constraint inevitably yield gains that go to infinity (see e.g., \cite{SY-WY-HJ-KM::17,YT-KZ-PE::18,LZ-HDC-XC::18}).  In practice, the multiplication of very large and very small values can create numerical problems. To address the problem, one way proposed in the literature is equivalent to employ a deadzone on $\vect{P}^j(t)(\vect{x}^j(t)-\ee^{\vect{A}^j(t-t_k)}\vect{x}^j(t_k))$ at the beginning of each time interval. Another approach is equivalent to using a lager interval $(t_k-\delta,t_{k+1}]$, where $\delta \in\real_{>0}$ is small positive number, to compute $\vect{P}^j(t)$ such that $\vect{P}^j(t)$ for $t\in(t_k,t_{k+1}]$ is no longer goes unbounded when $t$ goes to $t_k^{+}$ from the right. These approaches of course result in somewhat sacrifices in the accuracy at each arrival value at $t_{k+1}$ at the end of time interval $(t_k,t_{k+1}]$. However, as discussed in Remark 4.1, the errors will not accumulate. As interestingly discussed in \cite{SY-WY-HJ-KM::17}, the high-gain-challenge in the finite-time control can be contrasted with non-smooth feedback in the sliding mode control where the gain approaches infinity near sliding surface $x = 0$~\cite{KDY-VIU-UO:99} (but the total control input is zero). However, unlike the sliding mode control, where the practical implementation of the high-gain leads to persistent chattering on the sliding surface~\cite{KDY-VIU-UO:99}, in our case the concern arises only at the start of each transition from one sampling time to the other. The aforementioned practical measures to handle the high-gain-challenge in our setting indeed can be compared to the boundary layer approach~\cite{KDY-VIU-UO:99} in the sliding mode control to eliminate chattering. 
\boxend
}
\end{rem}

\begin{rem}[Time-varying MAS dynamics and network topology]{\rm
From the proof of  Theorem~\ref{thm::main}, we can see that the followers dynamics  can be allowed to be time-varying but piece-wise constant over each time interval $(t_k,t_{k+1}]$, i.e.,
$\vect{A}^i(t)=\Bvect{A}^i(t_k)$ and $\vect{B}^i(t)=\Bvect{B}^i(t_k)$, $i\until{N}$ for  $t\in(t_k,t_{k+1}]$. Similarly the network topology can be allowed to be time-varying as long as between $(t_k,t_{k+1}]$ the topology is fixed and satisfies the connectivity condition of Theorem~\ref{thm::main}.
 \boxend}
\end{rem}

\begin{rem}[Minimum energy control in {$[t_k,t_{k+1}]$}]{\rm
From statement (c) of Theorem~\ref{thm::main} it follows that at each time interval $[t_k,t_{k+1}]$, $k\in\mathbb{Z}_{\geq0}$, the control input $\vect{u}^i$ of each follower $i\until{N}$ is the minimum energy controller that transfers the follower from its current state $\vect{x}^i(t_k)$ to their desired state $\vect{x}^i(t_{k+1})=\vect{x}^0(t_{k})-\vect{F}^{i0}_k(t_k)$. \boxend}
\end{rem}

\begin{rem}[Tracking a priori known desired states at exact sampling time and design of arrival times]{\rm
We note that if the leader is virtual and the sampled states are some desired states that are known a priori to $\mathcal{N}_{\text{in}}^0$ with desired arrival time in $\real_{>0}$, the agents can arrive at the desired state of the leader at the desired arrival time. Furthermore, for the homogeneous followers, in cases that the arrival times is not specified one of the followers in $\mathcal{N}_{\text{in}}^0$ (we refer to it as super node that knows the initial state of all the other followers) can design the arrival times to meet other optimality conditions or to avoid violating constraints such as input saturation. In case of input saturation,  the fact that by virtue of statement (c) of Theorem~\ref{thm::main} the form of input vector of the followers are known to be~\eqref{eq::ctrlA1} can be instrumental to the super node in design of arrival times. Our second demonstrative example in Section~\ref{sec::demo} offers the details.}\boxend
\end{rem}

\begin{rem}[Extension of results to output tracking for a special class of MAS]\label{rem::output_tracking}{\rm
The design methodology of the state offset (formation) algorithm of Theorem~\ref{thm::main} can be used in output tracking for a special class of MAS. Let the network topology be as described in Theorem~\ref{thm::main} and the system dynamics of the followers be~\eqref{eq::agent_dyn} where $\vect{x}^i\in\real^{n^i}$ and $\vect{u}^i\in\real^{m^i}$ (the state and input dimensions of the followers are not necessarily the same). Let the objective be that the output $\vect{y}^i=\vect{C}^i\vect{x}^i\in\real^n$, $n\leq n^i$, of each follower should satisfy
\begin{align}\label{eq::objective-y}
    \vect{y}^i(t_{k+1})=\vect{x}^0(t_k)-\vect{F}^{i0}(t_k),\quad i\in\{1,\cdots,N\}.
\end{align}
If $\vect{C}^i\vect{B}^i$ is full row rank, we can use the control $\vect{u}^i=\vect{B}^{i^\top}\vect{C}^{i^\top}(\vect{C}^i\vect{B}^i\vect{B}^{i^\top}\vect{C}^{i^\top})^{-1}\cdot\\ (\vect{v}^i-\vect{C}^i\vect{A}^i\vect{x}^i)$, $i\in\{1,\cdots,N\}$, to write the output dynamics of each follower $i$ as 
$\dvect{y}^i=\vect{v}^i$. Then the method of Theorem~\ref{thm::main} can be used to design $\vect{v}^i\in\real^n$, which can then be used to obtain the appropriate $\vect{u}^i$ that will make the followers meet~\eqref{eq::objective-y}.}\boxend
\end{rem}

Finally, we note that if the followers are homogeneous, the followers can achieve full synchronization in the sense stated in the result below. 

\begin{cor}[Full synchronization for homogeneous followers]\label{cor::ui=uj}
    {\rm Let the state offset be constant i.e., $\vect{F}^{i0}(t_k)=\vect{F}^{i0}\in\real^n$ for all $i\in\{1,\cdots,N\}$ or (equivalently   $\vect{F}^{ij}(t_k)=\vect{F}^{ij}\in\real^n$ for $i,j\in\{1,\cdots,N\}$), and assume that the followers are homogeneous. Then, it follows from statements (b) and (c) of Theorem~\ref{thm::main} that  the followers' trajectories and inputs satisfy $\vect{x}^j(t)=\vect{x}^i(t)+\vect{F}^{ij}$ for $t\in[t_1,\infty)$ and $\vect{u}^i(t)=\vect{u}^j(t)$ for $t\in(t_1,\infty)$, for every $i,j\until{N}$. One can easily verify this point by shifting the state coordinate with $\vect{F}^{ij}$. Moreover, if the agents are initially in the specified offset i.e., $\vect{x}^j(0)=\vect{x}^i(0)+\vect{F}^{ij}$ for all $i,j\until{N}$, then these qualities also hold for $t\in[0,t_1]$.

Assume that there exists $\vect{K}^i \in \real^{m^i\times n}$, $\vect{W}^i\in \real^{m^i\times m^i}$ for $i\in\{1,\cdots,N\}$ and a controllable pair $(\vect{A},\vect{B})$ known to all followers, such that using $\vect{u}^i=\vect{K}^i\vect{x}^i+\vect{W}^i\vect{v}^i$, $i\in\{1,\cdots,N\}$, makes the followers dynamic homogeneous, i.e., $\dvect{x}^i=\vect{A}\vect{x}^i+\vect{B}\vect{v}^i$, $\vect{A}=\vect{A}^i+\vect{B}^i\vect{K}^i$ and $\vect{B}=\vect{B}^i\vect{W}^i$,  $i\in\{1,\cdots,N\}$. Then, it is also possible to achieve full state synchronization by implementing \eqref{eq::ctrlA2} to $\vect{v}^i$ for heterogeneous followers. One sufficient condition for the existence of $\vect{K}^i$ and $\vect{W}^i$, $i\in\{1,\cdots,N\}$, is that  $\vect{B}^i$ of each follower $i\in\{1,\cdots,N\}$ is full row rank. Then,  $\vect{K}^i=\vect{B}^{i^\top}(\vect{B}^i\vect{B}^{i^\top})^{-1}(\vect{A}-\vect{A}^i)$, $\vect{W}^i=\vect{B}^{i^\top}(\vect{B}^i\vect{B}^{i^\top})^{-1}\vect{B}$ and $(\vect{A},\vect{B})$ can be any controllable pair.}\boxend
\end{cor}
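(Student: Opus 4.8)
The plan is to treat the two parts of the statement separately, reducing both to direct substitutions into the closed-form trajectory~\eqref{eq::agent_traj} and input~\eqref{eq::ctrlA1} supplied by Theorem~\ref{thm::main}. For the first part, homogeneity collapses all agent-indexed objects to common ones, $\vect{A}^i=\vect{A}$, $\vect{B}^i=\vect{B}$, $\vect{G}^i_k=\vect{G}_k$, and $\overline{\vect{G}}^{\,i}_k(t)=\overline{\vect{G}}_k(t)$ for every $i$. I would first anchor an induction at the sampling instants: statement (a) of Theorem~\ref{thm::main} gives $\vect{x}^j(t_{k+1})-\vect{x}^i(t_{k+1})=\vect{F}^{ij}$ for every $k\in\integernonnegative$, so taking $k=0$ shows the followers are in the prescribed offset at $t_1$, and more generally $\vect{x}^j(t_k)-\vect{x}^i(t_k)=\vect{F}^{ij}$ for all $k\geq1$. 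This is the data I would feed into the inter-sampling formulas.

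For the inter-sampling equality I would introduce the shifted trajectory $\vect{y}^j(t)=\vect{x}^j(t)-\vect{F}^{ij}$ (exactly the coordinate shift hinted at in the statement) and show it coincides with $\vect{x}^i(t)$ on each interval $[t_k,t_{k+1}]$, $k\geq1$, since both share the value $\vect{x}^i(t_k)$ at $t_k$. Subtracting~\eqref{eq::ctrlA1} for $i$ and $j$ and inserting $\vect{x}^j(t_k)-\vect{x}^i(t_k)=\vect{F}^{ij}$ together with the target difference $\vect{F}^{i0}-\vect{F}^{j0}=\vect{F}^{ij}$, the input difference telescopes to
\begin{align*}
    \vect{u}^i(t)-\vect{u}^j(t)=\vect{B}^\top\ee^{\vect{A}^\top(t_{k+1}-t)}\vect{G}_k^{-1}(\ee^{\vect{A}T_k}-\vect{I})\vect{F}^{ij}.
\end{align*}
The crucial---and in my view the only nontrivial---step is to argue that this expression vanishes. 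This is precisely where the translational character of the offset must enter: the right-hand side is zero exactly when $(\ee^{\vect{A}T_k}-\vect{I})\vect{F}^{ij}=\vect{0}$, which holds automatically in the pure synchronization case $\vect{F}^{ij}=\vect{0}$ and, more generally, whenever $\vect{A}\vect{F}^{ij}=\vect{0}$, i.e. the offset is invariant under the drift. The same condition $\vect{A}\vect{F}^{ij}=\vect{0}$ makes $\ee^{\vect{A}(t-t_k)}\vect{F}^{ij}=\vect{F}^{ij}$ and $(\vect{I}-\ee^{\vect{A}T_k})\vect{F}^{ij}=\vect{0}$, so the trajectory formula~\eqref{eq::agent_traj} collapses $\vect{x}^j(t)-\vect{x}^i(t)=\ee^{\vect{A}(t-t_k)}\vect{F}^{ij}+\overline{\vect{G}}_k(t)\vect{G}_k^{-1}(\vect{I}-\ee^{\vect{A}T_k})\vect{F}^{ij}$ to $\vect{x}^j(t)-\vect{x}^i(t)=\vect{F}^{ij}$. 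An induction over $k\geq1$ then propagates both $\vect{u}^i(t)=\vect{u}^j(t)$ and $\vect{x}^j(t)=\vect{x}^i(t)+\vect{F}^{ij}$ across $[t_1,\infty)$; repeating the argument from $k=0$ under the matched-initialization hypothesis $\vect{x}^j(0)=\vect{x}^i(0)+\vect{F}^{ij}$ extends them to $[0,t_1]$.

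For the second part I would substitute the preliminary feedback $\vect{u}^i=\vect{K}^i\vect{x}^i+\vect{W}^i\vect{v}^i$ into~\eqref{eq::agent_dyn} to obtain $\dvect{x}^i=(\vect{A}^i+\vect{B}^i\vect{K}^i)\vect{x}^i+\vect{B}^i\vect{W}^i\vect{v}^i=\vect{A}\vect{x}^i+\vect{B}\vect{v}^i$, so that in the $\vect{v}^i$-coordinates the followers are homogeneous with the common controllable pair $(\vect{A},\vect{B})$; applying Theorem~\ref{thm::main} and the first part of this corollary to $\vect{v}^i$ then yields full synchronization. It remains to verify the stated sufficient condition: when $\vect{B}^i$ has full row rank, $\vect{B}^{i^\top}(\vect{B}^i\vect{B}^{i^\top})^{-1}$ is a right inverse of $\vect{B}^i$, so the proposed $\vect{K}^i=\vect{B}^{i^\top}(\vect{B}^i\vect{B}^{i^\top})^{-1}(\vect{A}-\vect{A}^i)$ and $\vect{W}^i=\vect{B}^{i^\top}(\vect{B}^i\vect{B}^{i^\top})^{-1}\vect{B}$ give $\vect{A}^i+\vect{B}^i\vect{K}^i=\vect{A}$ and $\vect{B}^i\vect{W}^i=\vect{B}$ by direct cancellation.

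The main obstacle throughout is the inter-sampling step of the first part. At the sampling instants the offsets hold by construction (statement (a)), but between samples the open-loop minimum-energy arcs of two followers steered to endpoints differing by $\vect{F}^{ij}$ do not in general remain a constant $\vect{F}^{ij}$ apart: the difference is governed by $\ee^{\vect{A}(t-t_k)}\vect{F}^{ij}+\overline{\vect{G}}_k(t)\vect{G}_k^{-1}(\vect{I}-\ee^{\vect{A}T_k})\vect{F}^{ij}$, which equals $\vect{F}^{ij}$ for all intermediate $t$ only when the offset is invariant under the flow. I would therefore make this translational-formation condition ($\vect{A}\vect{F}^{ij}=\vect{0}$, in particular the zero-offset synchronization case) explicit, since it is exactly what renders the coordinate-shift argument valid and both claimed identities---input equality and constant state offset---simultaneously true on the whole interval.
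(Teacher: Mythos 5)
Your computations are correct, and in the pure synchronization case $\vect{F}^{ij}=\vect{0}$ they deliver exactly the claimed identities; but it is worth being explicit that the paper supplies no proof of this corollary beyond the remark that one can ``verify this point by shifting the state coordinate with $\vect{F}^{ij}$,'' and your argument is the honest expansion of that remark. In doing so you have exposed the hypothesis the remark silently uses: the shifted copy $\vect{y}^j=\vect{x}^j-\vect{F}^{ij}$ satisfies $\dvect{y}^j=\vect{A}\vect{y}^j+\vect{B}\vect{u}^j+\vect{A}\vect{F}^{ij}$, so it obeys the same dynamics as $\vect{x}^i$ only when $\vect{A}\vect{F}^{ij}=\vect{0}$ --- precisely the drift-invariance condition you isolate, and the same condition that annihilates both $(\ee^{\vect{A}T_k}-\vect{I})\vect{F}^{ij}$ in the input difference and the intermediate term $\bigl(\ee^{\vect{A}(t-t_k)}-\vect{I}\bigr)\vect{F}^{ij}+\overline{\vect{G}}_k(t)\vect{G}_k^{-1}(\vect{I}-\ee^{\vect{A}T_k})\vect{F}^{ij}$ in the state difference. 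For a general constant $\vect{F}^{ij}\neq\vect{0}$, statement (a) of Theorem~\ref{thm::main} guarantees the offset only at the sampling instants, and the inter-sample identities $\vect{x}^j(t)=\vect{x}^i(t)+\vect{F}^{ij}$ and $\vect{u}^i(t)=\vect{u}^j(t)$ asserted by the corollary can indeed fail; your insistence on stating $\vect{A}\vect{F}^{ij}=\vect{0}$ (or $\vect{F}^{ij}=\vect{0}$) as an explicit hypothesis is therefore a genuine improvement rather than an over-caution. It is also consistent with the paper's intended use case: for mobile agents with offsets supported only on the position coordinates (e.g., the double integrator of the second example, where $\vect{F}^{ij}=[\Delta p\;\;0]^\top$ and $\vect{A}\vect{F}^{ij}=\vect{0}$), the condition holds automatically, which is presumably why the authors treat the claim as immediate under the banner of translationally invariant formations. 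Your treatment of the second part --- substituting the pre-feedback $\vect{u}^i=\vect{K}^i\vect{x}^i+\vect{W}^i\vect{v}^i$ to homogenize the dynamics and verifying the right-inverse identities $\vect{A}^i+\vect{B}^i\vect{K}^i=\vect{A}$, $\vect{B}^i\vect{W}^i=\vect{B}$ when $\vect{B}^i$ has full row rank --- is exactly the intended argument and needs no amendment.
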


\section{Demonstrative examples}\label{sec::demo}
In this section, we demonstrate our results via numerical examples. 

\subsection{A nonlinear-leader following problem for a group of heterogeneous followers}
Consider a group of $7$ mass-spring-damper system (followers)
\begin{align}\label{eq::lmsd_dnm}
    \dvect{x}^i=\underbrace{\begin{bmatrix}0&1\\-\frac{k^i}{m^i}&-\frac{b^i}{m^i}\\\end{bmatrix}}_{\vect{A}^i}\vect{x}^i+\underbrace{\begin{bmatrix}0\\\frac{1}{m^i}\end{bmatrix}}_{\vect{B}^i}u^i, \quad i \in \{1, \dots, 7\}
\end{align}
where $\vect{x}^i=[x^i \quad \dot{x}^i] \in \reals^2$ is the state vector with $x^i \in \reals$ and $\dot{x}^i \in \reals$ representing the displacement and velocity of the mass, $k^i$, $b^i$ and $m^i$ are spring constant, damping constant and mass, respectively, and $u^i \in \reals$ is the input force. The system's parameters $(k^i,b^i,m^i)$ for $i \in\{1, \dots, 7\}$ are $(1, 0.5, 5)$, $(2, 0.5, 15)$, $(2.5, 1.5, 10)$, $(3, 0.8, 8)$, $(3.5, 1.5, 5)$, $(1.2, 1.8, 12)$, and $(0.5, 1, 10)$, respectively. The leader denoted by $0$ is a nonlinear mass-spring-damper system 
\begin{align}
        \dot{\vect{x}}^0=\begin{bmatrix}\dot{x}^0\\ \frac{1}{m^0}(u^0-b^0\dot{x}^0-k^0x^0-0.6 x^{0^3})\end{bmatrix},
\end{align}
where the input $u^0$ is unknown to the followers and the system parameters $(k^0,b^0,m^0)=(1.2,2,5)$. The interaction topology of the systems is shown in Fig.~\ref{fig::DAC}.
Followers $1$, $2$ and $3$ obtain the state of the leader with a sampling rate of $1$ per second, i.e., $T_k=1$ second, $k\in\mathbb{Z}_{\geq0}$. 
The followers start at  $\vect{x}^1(0)=[0 \quad 0]^\top$, $\vect{x}^2(0)=[-0.5 \quad 0]^\top$, $\vect{x}^3(0)=[-1\quad 0]^\top$, $\vect{x}^4(0)=[-1.5 \quad 0]^\top$, $\vect{x}^5(0)=[-2 \quad 0]^\top$, $\vect{x}^6(0)=[-2.5 \quad 0]^\top$, $\vect{x}^7(0)=[-3 \quad 0]^\top$ in a formation with uniform distance $0.5(m)$ to the previous number of the follower. The objective is for the followers to track the sampled state of the leader while preserving the initial formation of the systems at every sampling time~$t_k$. The follower $i$ only knows the local formation, i.e., $\vect{F}^{ij}(0)$ for $j\in\overline{\mathcal{N}}_{\text{out}}^{\,i}$. For example, follower 3 knows  $\vect{F}^{30}(0)=[1\quad0]^\top$, $\vect{F}^{31}(0)=[1\quad0]^\top$,  and $\vect{F}^{32}(0)=[0.5\quad0]^\top$.

\begin{figure}
  \centering
  \includegraphics[width=0.45\textwidth]{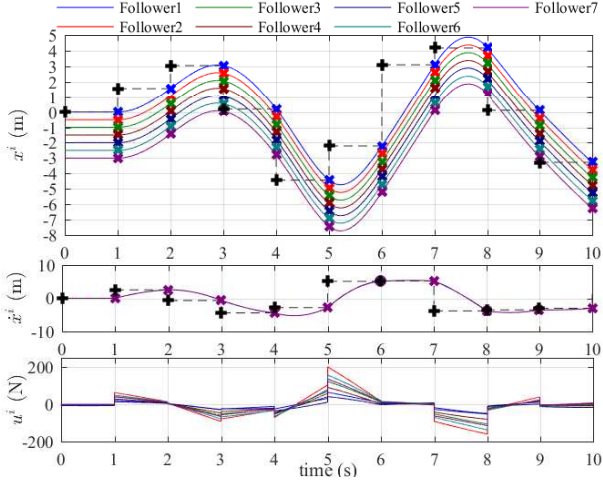}
  \caption{The state and control trajectories of followers of the first numerical example. }
  \label{fig::leader_follower_ex1}
\end{figure}

The result of implementing the algorithm of Theorem~\ref{thm::main} is shown in Fig. \ref{fig::leader_follower_ex1}. The `$+$' represents the sampled leader's states and `$\times$' shows the followers track the sampled state in the desired formation at the next sampled time. In this example interestingly in the transition times similar to what is expected from homogeneous followers the state and input of all the followers are almost offset-synchronized. However, this property is not necessarily true in general for heterogeneous followers.

\subsection{Reference state tracking for a group of second integrator dynamics with bounded inputs} 
We consider a group of $6$ followers with second order integrator dynamics
\begin{align}\label{eq::sec_dnm}
    \dvect{x}^i=\underbrace{\begin{bmatrix}0&1\\0&0\\\end{bmatrix}}_{\vect{A}}\vect{x}^i+\underbrace{\begin{bmatrix}0\\1\end{bmatrix}}_{\vect{B}}u^i,\quad -5\leq u^i\leq5,
\end{align}
for $i\until{6}$. 
The interaction topology of these followers is shown in Fig.~\ref{fig::int-top-1}, where, agent~$0$ is the virtual leader that is defined more precisely below.  Starting at initial conditions $\vect{x}^1(0)=[0 \quad 0]^\top$, $\vect{x}^2(0)=[2 \quad 0]^\top$, $\vect{x}^3(0)=[-2 \quad 0]^\top$, $\vect{x}^4(0)=[5 \quad 0]^\top$, $\vect{x}^5(0)=[10 \quad 0]^\top$, $\vect{x}^6(0)=[-10 \quad 0]^\top$, the leader-following mission for this team is to traverse through the sequence of desired states $\vect{x}^{\text{d}}=\{\vect{x}^{\text{d}}_1,\vect{x}^{\text{d}}_2,\vect{x}^{\text{d}}_3,\vect{x}^{\text{d}}_4\}=\left\{\begin{bmatrix}50 \\ 10\end{bmatrix},\begin{bmatrix}-50 \\ 10\end{bmatrix},\begin{bmatrix}20 \\ 10\end{bmatrix},\begin{bmatrix}0 \\ 0\end{bmatrix} \right\}$, which for privacy reason are only known to follower~$1$. The objective is to meet the sequence of desired states 
without violating any of the followers' control bounds. In this problem setting, follower~$1$ is the super node that knows the initial starting state of all the followers in the team and has computational power to compute the arrival times as follows to meet the team's objective. First, we note that by virtue of statement (c) of Theorem~\ref{thm::main} the form of input vector of the followers are known to be~\eqref{eq::ctrlA1}. Since follower 1 knows $\vect{x}^i(t_0)$ for $i \in \{1,\cdots,6\}$, follower $1$ can evaluate $u^i(t)$ of all the followers. Starting with $t^{\text{d}}_{0}=0$, follower $1$ computes the arrival time at desired state $\vect{x}^{\text{d}}_1$ from the process below
\begin{align}\label{eq::min_time}
    t^{\text{d},i}_{1}=&\argmin \int_{t^{\text{d}}_0}^{t^{\text{d},i}_{1}} \text{d}\tau \quad\text{subject to}~-5\leq u^i(t)\leq 5,
\end{align}
where 
${u}^i(t)=\vect{B}^\top \ee^{\vect{A}^\top (t^{\text{d},i}_{1}-t)}\vect{G}_0^{-1}(\vect{x}^{\text{d}}_0-\ee^{\vect{A}T_0}\vect{x}^{i}(0))$ with $T_0=t^{\text{d},i}_{1}-t^{\text{d}}_{0}$. Then, the arrival time so that the followers input do not saturate over $(t_0^d,t^{\text{d},i}_{1}]$ is set to $t_1^{\text{d}}=\max\{t_{1}^{\text{d},i}\}$. (II) Due to Corollary~\ref{cor::ui=uj}, after first epoch, the followers inputs are equal to each other. Then, the remaining arrival time $t_l^{\text{d}}$, $l\in\{2,3,4\}$ are computed  from the optimization problem 
\begin{align}\label{eq::min_time_2}
    t^{\text{d}}_{k+1}=&\argmin \int_{t^{\text{d}}_k}^{t^{\text{d}}_{k+1}} \text{d}\tau \quad\text{subject to}~-5\leq u(t)\leq 5,
\end{align}
where 
$u(t)=\vect{B}^{\top} \ee^{\vect{A}^{\top}(t^{\text{d}}_{k+1}-t)}\vect{G}_k^{-1}(\vect{x}^{\text{d}}_{k+1}-\ee^{\vect{A}T_k}\vect{x}^{\text{d}}_k)$ with $T_k=t^{\text{d}}_{k+1}-t^{\text{d}}_{k}$, for $k\in\{1,2,3\}$. The solution for this set of sequential optimal control problem is $t_1^{\text{d}}=6.7178$, $t_2^{\text{d}}=25.2061$, $t_3^{\text{d}}=30.1592$ and $t_4^{\text{d}}=40.4885$ seconds.
At the end of process, follower $1$ broadcasts the times to the network. Broadcasting the reference states is not allowed due to privacy reasons. We note that the desired arrival times can be done offline by the system operator. To match the notation in~\eqref{eq::ctrlA2},  at the implantation stage, we set $\vect{x}^0(t_{k-1})=\vect{x}^{\text{d}}_{k}$, $T_{k-1}=t^{\text{d}}_{k}-t^{\text{d}}_{k-1}$, and $t_k=t_{k-1}+T_{k-1}$, $k\until{4}$, where $t^{\text{d}}_{0}=0$. Figures~\ref{fig::ex2} shows that all the followers meet the desired  reference state of the virtual leader at the specified arrival times without delay (the `$+$' marks the reference states). Figure~\ref{fig::ex2} also shows the control history of the agents. As seen, the control inputs respect the saturation bounds $5$ or $-5$. We can also observe that the followers' states and inputs, as predicted in Corollary~\ref{cor::ui=uj}, are all synchronized after the first epoch. We should mention that by virtue of Lemma~\ref{lem::finite_sampling_time} in Appendix B, every $T_k$, $k\in\{0,1,2,3\}$, designed as described above is guaranteed to be a finite value.
 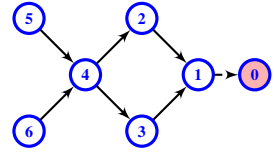
\begin{figure}
    \centering
     \begin{tikzpicture}[auto,thick,scale=0.5, every node/.style={scale=0.67}]
\tikzset{edge/.style = {->,> = latex'}}
el/.style = {inner sep=2pt, align=left, sloped},
\node (0) at (0,0) [draw, minimum size=15pt,color=blue, circle, very thick,fill=red!30] {{\small \textbf{0}}};
\node (1) at (-1.5,0) [draw, minimum size=15pt,color=blue, circle, very thick] {{\small \textbf{1}}};
\node (2) at (-3,1.5) [draw, minimum size=15pt,color=blue, circle, very thick] {{\small \textbf{2}}};
\node (3) at (-3,-1.5) [draw, minimum size=15pt,color=blue, circle, very thick] {{\small \textbf{3}}};
\node (4) at (-4.5,0) [draw, minimum size=15pt,color=blue, circle, very thick] {{\small \textbf{4}}};
\node (5) at (-6,1.5) [draw, minimum size=15pt,color=blue, circle, very thick] {{\small \textbf{5}}};
\node (6) at (-6,-1.5) [draw, minimum size=15pt,color=blue, circle, very thick] {{\small \textbf{6}}};
\draw[edge,dashed]  (1)to  (0);
\draw[edge]  (2)to  (1);
\draw[edge]  (3)to  (1);
\draw[edge]  (4)to  (2);
\draw[edge]  (4)to  (3);
\draw[edge]  (5)to  (4);
\draw[edge]  (6)to  (4);

\end{tikzpicture}
    \caption{An interaction topology with 6 followers. Agent \textbf{0} is the virtual leader.}
    \label{fig::int-top-1}
\end{figure}

\begin{figure}
  \centering
  \includegraphics[width=0.45\textwidth]{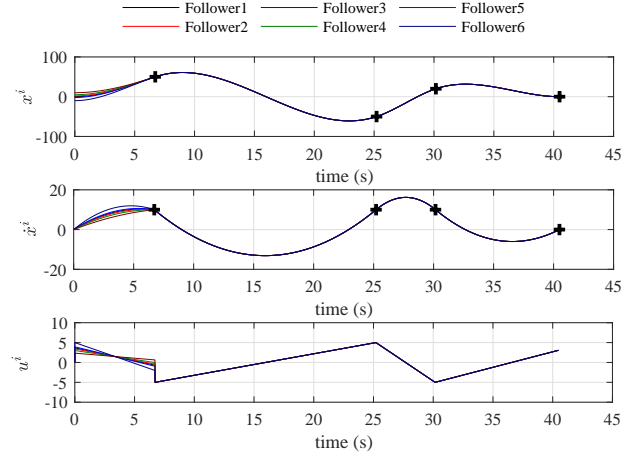}
  \caption{The state and control trajectories of followers of the second numerical example.}
  \label{fig::ex2}
\end{figure}

\begin{figure}
  \centering
  \includegraphics[width=0.45\textwidth]{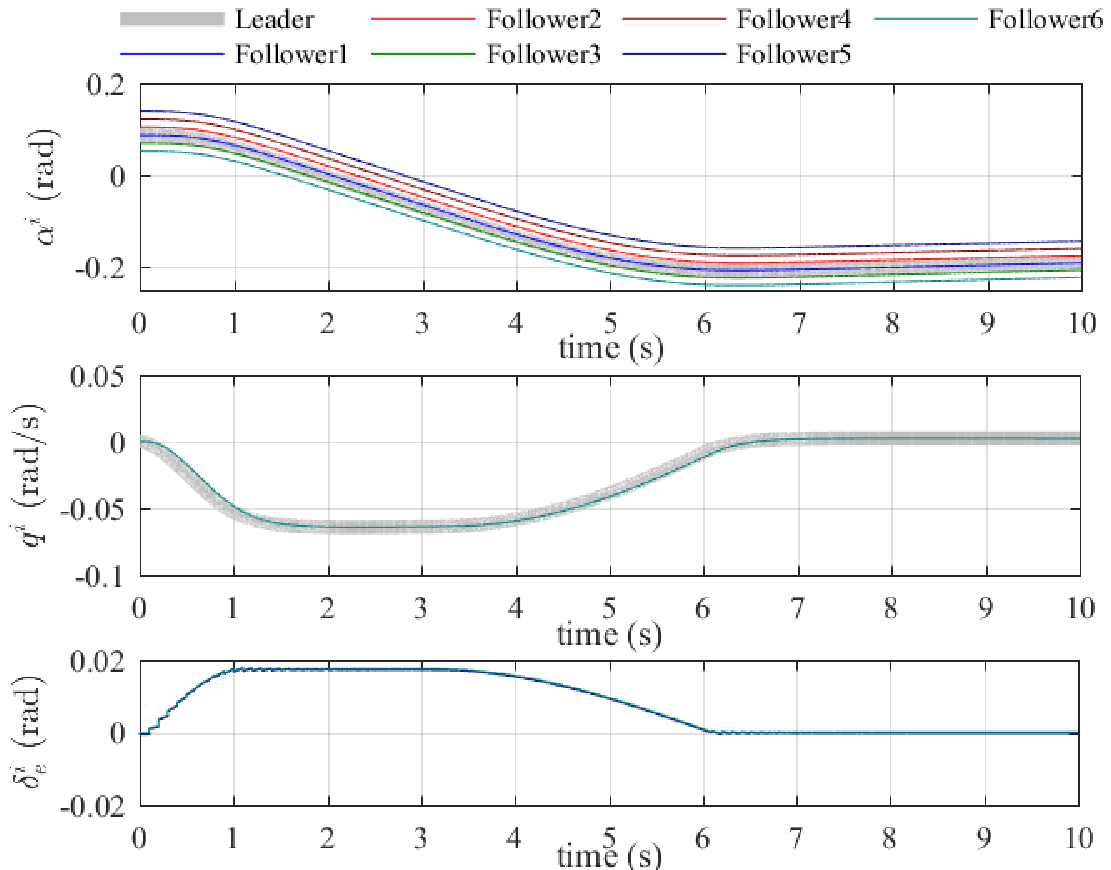}
  \caption{The state and control trajectories of followers of the third numerical example.}
  \label{fig::ex3}
\end{figure}

\subsection{Output tracking for a group of aircraft} 
We consider a group of $7$ aircraft whose short-period dynamics is given by (taken from \cite[Example 10.1]{MVC:07})
\begin{align*}
    \underbrace{\begin{bmatrix}\dot{\alpha}\\\dot{q}\end{bmatrix}}_{\dvect{x}^i}&=\underbrace{\begin{bmatrix}-0.0115&1\\-0.0395&-2.9857\\\end{bmatrix}}_{\vect{A}}\underbrace{\begin{bmatrix}\alpha\\q\end{bmatrix}}_{\vect{x}^i}+\underbrace{\begin{bmatrix}-0.1601\\-11.0437\end{bmatrix}}_{\vect{B}}\underbrace{\delta_e^i}_{u^i},\\ 
    y^i(t)&=\underbrace{\begin{bmatrix}0&1
    \end{bmatrix}}_{\vect{C}} \vect{x}^i,
\end{align*}
where $\alpha^i$, $q^i$ and $\delta_e^i$ are are respectively, angle of attack, pitch rate and elevator angle of aircraft $i\in\{0,\cdots,6\}$. The interaction topology of these aircraft is shown in Fig.~\ref{fig::int-top-1}, where, agent~$0$ is the leader. For this system $\vect{C}\vect{B}=-11.0437$, therefore the condition of Remark~\ref{rem::output_tracking} is satisfied and we can design a distributed algorithm to synchronize the pitch rate of the follower aircraft $\{1,\cdots,6\}$ to the  pitch rate of the leader aircraft when only sampled pitch rate of the leader at every 0.1 seconds is available to the follower aircraft $1$. Figure \ref{fig::ex3} demonstrates the results.

\section{Conclusion}\label{sec::conclusion}
In this paper, we have proposed a distributed leader-following algorithm for heterogeneous multi-agent systems with an active leader with unknown input. We have proved that our distributed leader-following algorithm for the linear followers steers the group to be at the sampled states of the leader at the specified arrival times. We showed that the control input of each follower agent between the sampling times is a minimum energy control. We also showed that after the first sampling epoch, the states of all the homogeneous follower agents are synchronized with each other. We demonstrated our results via  leader-following problems of mass-spring-damper systems, mobile agents with second order integrator dynamics, and a group of aircraft. Future work will focus on extending our results to output following problem.


\bibliographystyle{ieeetr}%
\bibliography{Reference} 

\clearpage
\onecolumn
\section*{Appendix A}
\renewcommand{\theequation}{A.\arabic{equation}}
\renewcommand{\thethm}{A.\arabic{thm}}
\renewcommand{\thelem}{A.\arabic{lem}}
\renewcommand{\thedefn}{A.\arabic{defn}}

\begin{proof}[Proof of Theorem~\ref{thm::main}]
For $\mathcal{G}\cup\mathcal{G}_l$ an acyclic digraph with $0$ as the global sink, the agents can be sorted into a series of hierarchical subsets. Without loss of generality, we sort the agents as follows. Recall that $\VV=\{1,\cdots,N\}$ is the set of the followers. We let $\mathcal{V}_0=\{ 0\}$. Next, we let $\mathcal{V}_1$ to be the subset of agents in $\GG$ that are connected to the leader but they have no out-neighbor in $\GG$, i.e., $\mathcal{V}_1=\{ i\in\VV|\,\,
\mathit{1}^i=1~\text{and}~\mathcal{N}^i_{\text{out}}=\{\}\}$. We sequentially define the lower subset as $\VV_k=\{ i\in \VV\backslash \cup_{j=1}^{k-1} \VV_j|\,\, \mathcal{N}_{\text{out}}^i\subseteq \cup_{j=0}^{k-1} \VV_j\}$, where $k \in \{ 2,\cdots,m\}$, such that $\cup_{j=1}^m\mathcal{V}_j=\mathcal{V}$. In short, in this hierarchy,  the agents in the lower subset only receive information from the agents in the higher subsets.

We use mathematical induction over time intervals $[t_0,t_{k+1}]$, $k \in \mathbb{Z}_{>0}$ for our proof. That is we show that the theorem statements hold for $k=0$. Then assuming that the theorem statements hold for $k$, we show the validity of the statement over $k+1$. The proof of the case for $k=0$ is very similar to the case of $k+1$ and omitted here of brevity. Now let the theorem statements be valid over $[t_0,t_k]$ and we show the validity of the statements at $(t_k,t_{k+1}]$ and as a result the validity of the statement over $[t_0,t_{k+1}]$. For our proof we use as the mathematical induction over $\mathcal{V}_l$ where $l \in \{1, \cdots, m\}$.

Consider first the dynamics of the followers in $\mathcal{V}_l$. For $l=1$, the control \eqref{eq::ctrlA2} reduces to \eqref{eq::ctrlA1}, since $\omega_l^i =1$ and $\sum_{j=1}^N \mathsf{a}_{ij}=0$. Hence statement (c) holds. The trajectory of $\vect{x}^i(t)$ after substituting for the control input $\vect{u}^i$ is
\begin{align*}
        \vect{x}^i(t)&=\ee^{\vect{A}^i(t-t_k)}\vect{x}^i(t_k)+\int_{t_k}^t \ee^{\vect{A}^i(t-\tau)}\vect{B}^i\vect{u}^i(\tau)d\tau\\
        &=\ee^{\vect{A}^i(t-t_k)}\vect{x}^i(t_k)+\int_{t_k}^t \ee^{\vect{A}^i(t-\tau)}\vect{B}^i\vect{B}^{i^\top} \ee^{\vect{A}^{i^\top}(t_{k+1}-\tau)}\vect{G}_k^{i^{-1}}\\
        &\quad\cdot(\vect{x}^0(t_k)-\vect{F}^{i0}(t_k)-\ee^{\vect{A}^i T_k}\vect{x}^{i}(t_k))d\tau\\
        &=\ee^{\vect{A}^i(t-t_k)}\vect{x}^i(t_k)+\overline{\vect{G}}\,^i_k(t)\vect{G}_k^{i^{-1}}(\vect{x}^0(t_k)-\vect{F}^{i0}(t_k)-\ee^{\vect{A}^i T_k}\vect{x}^{i}(t_k)).
\end{align*}
Then given~\eqref{eq::barG_k}, the trajectories of agents $i\in\mathcal{V}_1$ is given by~\eqref{eq::agent_traj} for $t\in\real_{\geq0}$, confirming Statement (b). Moreover, when $t=t_{k+1}$, the final state of the end of this period is
     \begin{align*}
        \vect{x}^i(t_{k+1})&=\ee^{\vect{A}^i T_k}\vect{x}^i(t_k)+\overline{\vect{G}}\,^i_k(t_{k+1})\vect{G}_k^{i^{-1}}(\vect{x}^0(t_k)-\vect{F}^{i0}(t_k)-\ee^{\vect{A}^i T_k}\vect{x}^{i}(t_k))\\
        &=\vect{x}^0(t_k)-\vect{F}^{i0}(t_k).
    \end{align*}
    Also, the relative state with respect to follower $j \in \overline{\mathcal{N}}^{\,i}_{\text{out}}$, is $\vect{x}^j(t_{k+1})-\vect{x}^i(t_{k+1})=\vect{x}^0(t_k)-\vect{F}^{j0}(t_k)-\vect{x}^0(t_k)+\vect{F}^{i0}(t_k)=\vect{F}^{ij}(t_k)$. Therefore,
    statement (a) holds.
    
    Next, let statements (a), (b) and (c) be true for $i \in \VV_{s}$, $s\in \{1,\cdots, l-1\}$. Then, for the follower $i \in \mathcal{V}_l$ we have:
    \begin{align*}
    \vect{x}^i(t)&=\ee^{\vect{A}^i(t-t_k)}\vect{x}^i(t_k)+\int_{t_k}^t \ee^{\vect{A}^i(t-\tau)}\vect{B}^i\vect{u}^i(\tau) d\tau\\
    &=\ee^{\vect{A}^i(t-t_k)}\vect{x}^i(t_k)+\frac{\mathit{1}^i}{\mathit{1}^i+\dout^i}\int_{t_k}^t \ee^{\vect{A}^i(t-\tau)}\vect{B}^i\vect{B}^{i^\top} \ee^{\vect{A}^{i^\top}(t_{k+1}-\tau)}\vect{G}_k^{i^{-1}}\\
    &\quad\cdot(\vect{x}^0(t_k)-\vect{F}^{i0}(t_k)-\ee^{\vect{A}^i T_k}\vect{x}^{i}(t_k))d\tau\\
    &+\frac{1}{\mathit{1}^i+\dout^i}\int_{t_k}^t \ee^{\vect{A}^i(t-\tau)}\vect{B}^i\vect{B}^{i^\top} \ee^{\vect{A}^{i^\top}(t_{k+1}-\tau)}\vect{G}_k^{i^{-1}}\\
    &\quad\cdot\sum\limits_{j = 1}^N \mathsf{a}_{ij}\vect{G}_k^j\vect{P}^j(\tau)(\vect{x}^j(\tau)-\ee^{\vect{A}^j(\tau-t_k)}\vect{x}^j(t_k))d\tau\\
    &+\frac{1}{\mathit{1}^i+\dout^i}\!\!\int_{t_k}^t \!\!\ee^{\vect{A}^i(t-\tau)}\vect{B}^i\vect{B}^{i^\top} \ee^{\vect{A}^{i^\top}(t_{k+1}-\tau)}\vect{G}_k^{i^{-1}}\\
    &\quad\cdot\sum_{j=1}^N{\mathsf{a}_{ij}(\ee^{\vect{A}^j T_k}\vect{x}^j(t_k)\!\!-\!\!\ee^{\vect{A}^i T_k}\vect{x}^{i}(t_k)-\vect{F}^{ij}(t_k))}d\tau\\
    =\,&\ee^{\vect{A}^i(t-t_k)}\vect{x}^i(t_k)+\frac{\mathit{1}^i}{\mathit{1}^i+\dout^i}\overline{\vect{G}}\,^i_k(t)\vect{G}_k^{i^{-1}}(\vect{x}^0(t_k)-\vect{F}^{i0}(t_k)-\ee^{\vect{A}^i T_k}\vect{x}^{i}(t_k))\\
    &+\frac{1}{\mathit{1}^i+\dout^i}\int_{t_k}^t \ee^{\vect{A}^i(t-\tau)}\vect{B}^i\vect{B}^{i^\top} \ee^{\vect{A}^{i^\top}(t_{k+1}-\tau)}\vect{G}_k^{i^{-1}}\\
    &\quad\cdot\sum\limits_{j = 1}^N \mathsf{a}_{ij}\vect{G}_k^j\vect{P}^j(\tau)(\vect{x}^j(\tau)-\ee^{\vect{A}^j(\tau-t_k)}\vect{x}^j(t_k))d\tau\\
    &+\frac{1}{\mathit{1}^i+\dout^i}\overline{\vect{G}}\,^i_k(t)\vect{G}_k^{i^{-1}}\sum\limits_{j = 1}^N \mathsf{a}_{ij}(\ee^{\vect{A}^j T_k}\vect{x}^j(t_k)-\ee^{\vect{A}^i T_k}\vect{x}^{i}(t_k)-\vect{F}^{ij}(t_k)).
    \end{align*}
    Since $j \in \VV_s$, where $s<l$, the trajectory $\vect{x}^j(\tau)$ of agent $j$ is assume to follow \eqref{eq::agent_traj}. Therefore, we can put \eqref{eq::agent_traj} into $\vect{x}^j(\tau)$.
    \begin{align*}
   \vect{x}^i(t)=\,&\ee^{\vect{A}^i(t-t_k)}\vect{x}^i(t_k)+\frac{\mathit{1}^i}{\mathit{1}^i+\dout^i}\overline{\vect{G}}\,^i_k(t)\vect{G}_k^{i^{-1}}(\vect{x}^0(t_k)-\ee^{\vect{A}^i T_k}\vect{x}^{i}(t_k)-\vect{F}^{i0}(t_k))\\
     &\!\!\!\!\!\!\!\!\!\!\!\!\!\!+\frac{1}{\mathit{1}^i+\dout^i}\!\!\int_{t_k}^t \!\!\ee^{\vect{A}^i(t-\tau)}\vect{B}^i\vect{B}^{i^\top} \!\!\ee^{\vect{A}^{i^\top}\!\!(t_{k+1}-\tau)}\vect{G}_k^{i^{-1}}\!\!\sum\limits_{j = 1}^N \mathsf{a}_{ij}\vect{G}_k^j\vect{P}^j(\tau)(\ee^{\vect{A}^j(\tau-t_k)}\vect{x}^j(t_k)\\
    &\quad+\overline{\vect{G}}^j_k(\tau)\vect{G}_k^{j^{-1}}(\vect{x}^0(t_k)-\vect{F}^{j0}(t_k)-\ee^{\vect{A}^j T_k}\vect{x}^{j}(t_k))-\ee^{\vect{A}^j(\tau-t_k)}\vect{x}^j(t_k))d\tau\\
    &+\frac{1}{\mathit{1}^i+\dout^i}\overline{\vect{G}}\,^i_k(t)\vect{G}_k^{i^{-1}}\sum\limits_{j = 1}^N \mathsf{a}_{ij}(\ee^{\vect{A}^j T_k}\vect{x}^j(t_k)-\ee^{\vect{A}^i T_k}\vect{x}^{i}(t_k)-\vect{F}^{ij}(t_k))\\
    =&\ee^{\vect{A}^i(t-t_k)}\vect{x}^i(t_k)+\frac{\mathit{1}^i}{\mathit{1}^i+\dout^i}\overline{\vect{G}}\,^i_k(t)\vect{G}_k^{i^{-1}}(\vect{x}^0(t_k)-\vect{F}^{i0}(t_k)-\ee^{\vect{A}^i T_k}\vect{x}^{i}(t_k))\\
    &+\frac{1}{\mathit{1}^i+\dout^i}\overline{\vect{G}}\,^i_k(t)\vect{G}_k^{i^{-1}}\sum\limits_{j = 1}^N \mathsf{a}_{ij}(\vect{x}^0(t_k)-\vect{F}^{j0}(t_k)\!-\!\ee^{\vect{A}^j T_k}\vect{x}^{j}(t_k))\\
    &+\frac{1}{\mathit{1}^i+\dout^i}\overline{\vect{G}}\,^i_k(t)\vect{G}_k^{i^{-1}}\sum\limits_{j = 1}^N \mathsf{a}_{ij}(\ee^{\vect{A}^j T_k}\vect{x}^j(t_k)-\ee^{\vect{A}^i T_k}\vect{x}^{i}(t_k)-\vect{F}^{ij}(t_k))\\
    =\,&\ee^{\vect{A}^i(t-t_k)}\vect{x}^i(t_k)+\frac{\mathit{1}^i}{\mathit{1}^i+\dout^i}\overline{\vect{G}}\,^i_k(t)\vect{G}_k^{i^{-1}}(\vect{x}^0(t_k)-\vect{F}^{i0}(t_k)-\ee^{\vect{A}^i T_k}\vect{x}^{i}(t_k))\\
    &+\frac{\dout^i}{\mathit{1}^i+\dout^i}\overline{\vect{G}}\,^i_k(t)\vect{G}_k^{i^{-1}}(\vect{x}^0(t_k)-\vect{F}^{i0}(t_k)-\ee^{\vect{A}^i T_k}\vect{x}^{i}(t_k))\\
     =\,&\ee^{\vect{A}^i (t-t_k)} \vect{x}^i(t_k)+\overline{\vect{G}}\,^i_k(t)\vect{G}_k^{i^{-1}}(\vect{x}^0(t_k)-\vect{F}^{i0}(t_k)-\ee^{\vect{A}^i T_k}\vect{x}^{i}(t_k)).
\end{align*}
\begin{align*}
     \vect{x}^i(t_{k+1})=\,&\ee^{\vect{A}^i T_k}\vect{x}^i(t_k)+\overline{\vect{G}}\,^i_k(t_{k+1})\vect{G}_k^{i^{-1}}(\vect{x}^0(t_k)-\vect{F}^{i0}(t_k)-\ee^{\vect{A}^i T_k}\vect{x}^{i}(t_k))\\
     =\,&\vect{x}^0(t_k)-\vect{F}^{i0}(t_k).
\end{align*}
Similarly, the relative state with respect to agent $j \in \overline{\mathcal{N}}\,^i_{\text{out}}$, is $\vect{x}^j(t_{k+1})-\vect{x}^i(t_{k+1})=\vect{x}^0(t_k)-\vect{F}^{j0}(t_k)-\vect{x}^0(t_k)+\vect{F}^{i0}(t_k)=\vect{F}^{ij}(t_k)$. 
Thereby, statement (a) and (b) also hold for the case $l=l$. Then we show that control \eqref{eq::ctrlA2} is equivalent to \eqref{eq::ctrlA1} as follows
\begin{align*}
    \vect{u}^i(t)&=\frac{\mathit{1}^i}{\mathit{1}^i+\dout^i}[\vect{B}^{i^\top} \ee^{\vect{A}^{i^\top} (t_{k+1}-t)}\vect{G}_k^{i^{-1}}(\vect{x}^0(t_k)-\vect{F}^{i0}(t_k)-\ee^{\vect{A}^i T_k}\vect{x}^{i}(t_k))]\\
    &\!\!\!\!\!\!\!\!+\frac{1}{\mathit{1}^i+\dout^i}[\vect{B}^{i^\top} \ee^{\vect{A}^{i^\top}(t_{k+1}-t)}\vect{G}_k^{i^{-1}}\sum\limits_{j = 1}^N \mathsf{a}_{ij}\vect{G}_k^j\vect{P}^j(t) (\vect{x}^j(t)-\ee^{\vect{A}^j (t-t_k)}\vect{x}^j(t_k))\\
    &\quad+\vect{B}^i{^\top} \ee^{\vect{A}^{i^\top}(t_{k+1}-t)}\vect{G}_k^{i^{-1}}\sum\limits_{j = 1}^N {\mathsf{a}_{ij} (\ee^{\vect{A}^j T_k }\vect{x}^j(t_k)-\ee^{\vect{A}^i T_k}\vect{x}^{i}(t_k)-\vect{F}^{ij}(t_k))}]\\
    &=\frac{\mathit{1}^i}{\mathit{1}^i+\dout^i}[\vect{B}^{i^\top} \ee^{\vect{A}^{i^\top} (t_{k+1}-t)}\vect{G}_k^{i^{-1}}(\vect{x}^0(t_k)-\vect{F}^{i0}(t_k)-\ee^{\vect{A}^i T_k}\vect{x}^{i}(t_k))]\\
    &+\frac{1}{\mathit{1}^i+\dout^i}[\vect{B}^{i^\top} \ee^{\vect{A}^{i^\top}(t_{k+1}-t)}\vect{G}_k^{i^{-1}}\sum\limits_{j = 1}^N \mathsf{a}_{ij}\vect{G}_k^j\vect{P}^j(t) (\ee^{\vect{A}^j(t-t_k)}\vect{x}^j(t_k)\\
    &\quad +\overline{\vect{G}}^j_k(t)\vect{G}_k^{j^{-1}}(\vect{x}^0(t_k)-\vect{F}^{j0}(t_k)-\ee^{\vect{A}^j T_k}\vect{x}^{j}(t_k))-\ee^{\vect{A}^j (t-t_k)}\vect{x}^j(t_k))\\
    &+\vect{B}^{i^\top} \ee^{\vect{A}^{i^\top}(t_{k+1}-t)}\vect{G}_k^{i^{-1}}\sum\limits_{j = 1}^N {\mathsf{a}_{ij} (\ee^{\vect{A}^j T_k }\vect{x}^j(t_k)-\ee^{\vect{A}^i T_k}\vect{x}^{i}(t_k)-\vect{F}^{ij}(t_k))}]\\
    &=\frac{\mathit{1}^i}{\mathit{1}^i+\dout^i}[\vect{B}^{i^\top} \ee^{\vect{A}^{i^\top} (t_{k+1}-t)}\vect{G}_k^{i^{-1}}(\vect{x}^0(t_k)-\vect{F}^{i0}(t_k)-\ee^{\vect{A}^i T_k}\vect{x}^{i}(t_k))]\\
    &+\frac{\dout^i}{\mathit{1}^i+\dout^i}[\vect{B}^{i^\top} \ee^{\vect{A}^{i^\top}(t_{k+1}-t)}\vect{G}_k^{i^{-1}}((\vect{x}^0(t_k)-\vect{F}^{i0}(t_k)-\ee^{\vect{A}^i T_k}\vect{x}^{i}(t_k))]\\
    &=\vect{B}^{i^\top}\ee^{\vect{A}^{i^\top} (t_{k+1}-t)}\vect{G}_k^{i^{-1}}(\vect{x}^0(t_k)-\vect{F}^{i0}(t_k)-\ee^{\vect{A}^i T_k}\vect{x}^{i}(t_k)).
\end{align*}
Therefore, statement (c) holds.
Since both the base case $l=1$ and the inductive step have been proved, by mathematical induction statement (a), (b) and (c) hold for all $l \in \{1, \cdots, m\}$.
\end{proof}

\section*{Appendix B}
\renewcommand{\theequation}{B.\arabic{equation}}
\renewcommand{\thethm}{B.\arabic{thm}}
\renewcommand{\thelem}{B.\arabic{lem}}
\renewcommand{\thedefn}{B.\arabic{defn}}
The result below ensures the feasibility of the sampling time design of the second numerical example of Section~\ref{sec::demo}.
\begin{lem}\label{lem::finite_sampling_time}{\rm
Consider a second order integrator system initialized at  $\vect{x}(t_0)=\vect{\chi}(t_0)\in\real^2$ at time $t_0\in\real_{\geq0}$. This system implements the minimum energy controller 
\begin{align}\label{eq::min_en_ctrl}
      {u}(t)=\vect{B}^{\top}\! \ee^{\vect{A}^{\top}(t_{k+1}-t)}\vect{G}(T_k)^{{-1}}(\vect{\chi}(t_{k+1})-&\ee^{\vect{A} T_k}\vect{x}(t_k)),\quad t\in(t_k,t_{k+1}],
    \end{align}  
    and $u(t_0)=0$ to traverse sequentially through a set of $m+1$ points  $\{\vect{\chi}(t_k)\}_{k=0}^{m}\subset \real^2$, where $t_k\in\real_{\geq0}$ is the arrival time at point $\vect{\chi}(t_k)$,   $T_k=t_{k+1}-t_k\in\real_{>0}$, $\vect{A}$ and $\vect{B}$ are given in \eqref{eq::sec_dnm}, and $\vect{G}$ is defined in \eqref{eq::G}.  For this system, there always exists a set of finite arrival times $\{t_k\}_{0}^{m}$ such that  
$|u(t)|\leq |u_{max}|$ for any $t\in[t_0, t_{m}]$, where $u_{max}\in\real_{\geq0}$ is the known bound on the control input.}
\end{lem}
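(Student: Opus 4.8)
The plan is to exploit the explicit structure of the minimum-energy control for the double integrator and to show that the peak control magnitude over each inter-sample interval can be driven below $u_{max}$ purely by enlarging the interval length $T_k = t_{k+1} - t_k$. Since the state at each arrival time is, by Lemma~\ref{lem::classical}, exactly the prescribed waypoint $\vect{\chi}(t_k)$ (the controller meets its terminal constraint), the per-interval analysis decouples: the control on $(t_k, t_{k+1}]$ depends only on the fixed endpoints $\vect{\chi}(t_k)$, $\vect{\chi}(t_{k+1})$ and on $T_k$, with no compounding of errors across intervals.

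First I would compute the relevant quantities in closed form. For the pair $(\vect{A}, \vect{B})$ of~\eqref{eq::sec_dnm} one has $\ee^{\vect{A}s} = \left[\begin{smallmatrix} 1 & s \\ 0 & 1\end{smallmatrix}\right]$, so the Gramian evaluates to $\vect{G}(T) = \left[\begin{smallmatrix} T^3/3 & T^2/2 \\ T^2/2 & T \end{smallmatrix}\right]$ with $\det \vect{G}(T) = T^4/12$ and an inverse whose entries scale as $T^{-3}$, $T^{-2}$, $T^{-1}$. Writing $\vect{\chi}(t_k) = [p_k, v_k]^\top$ and $s = t_{k+1} - t \in [0, T_k]$, substituting $\vect{x}(t_k) = \vect{\chi}(t_k)$ into the control and expanding $u(t) = \vect{B}^\top \ee^{\vect{A}^\top s}\vect{G}(T_k)^{-1}(\vect{\chi}(t_{k+1}) - \ee^{\vect{A}T_k}\vect{\chi}(t_k))$ yields a scalar expression that is affine in $s$.

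The decisive step is a scaling argument. Reparameterizing $s = \sigma T_k$ with $\sigma \in [0,1]$, the coefficients collapse so that $u(t)$ becomes a sum of terms of order $T_k^{-1}$ and $T_k^{-2}$ multiplied by the fixed displacement $p_{k+1} - p_k$, the fixed velocity change $v_{k+1} - v_k$, and $v_k$. Taking absolute values and using $|\sigma| \le 1$ gives a uniform bound of the form $|u(t)| \le c_1 T_k^{-2}|p_{k+1} - p_k| + c_2 T_k^{-1}(|v_k| + |v_{k+1} - v_k|)$ for explicit constants $c_1, c_2$, valid for all $t \in (t_k, t_{k+1}]$. This bound tends to zero as $T_k \to \infty$, so there exists a finite $T_k$ making the right-hand side at most $u_{max}$.

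The main obstacle --- and the point I would flag most carefully --- is the apparent blow-up in the terminal mismatch $\vect{\chi}(t_{k+1}) - \ee^{\vect{A}T_k}\vect{\chi}(t_k)$, whose position entry $p_{k+1} - p_k - T_k v_k$ grows linearly in $T_k$. This term does not defeat the argument because, after the substitution $s = \sigma T_k$, it is multiplied by the $O(T_k^{-2})$ top row of $\vect{G}(T_k)^{-1}$, so the $-T_k v_k$ contribution is only $O(T_k^{-1})$ and still vanishes. Once the per-interval bound is established, the conclusion is immediate: there are only $m$ intervals, so choosing each $T_k$ finite and large enough to satisfy its own constraint yields a finite set $\{t_k\}_0^m$ with $t_m = t_0 + \sum_{k=0}^{m-1} T_k < \infty$ and $|u(t)| \le u_{max}$ on all of $[t_0, t_m]$.
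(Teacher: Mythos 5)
Your proposal is correct and follows essentially the same route as the paper's proof: both exploit that the terminal constraint is met exactly so the intervals decouple, write the double-integrator minimum-energy control in closed form as an affine function of time over $(t_k,t_{k+1}]$, and bound its magnitude by terms scaling as $T_k^{-2}$ times the position increment and $T_k^{-1}$ times the velocity data, which vanish as $T_k\to\infty$. The only cosmetic difference is that you bound uniformly over the normalized time $\sigma\in[0,1]$, while the paper evaluates the affine function at its two endpoints $t'\to 0^+$ and $t'=T_k$.
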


\begin{proof}To establish the proof,
similar to the proof of Theorem~\ref{thm::main}, we relay on the mathematical induction over time intervals $[t_0,t_{k+1}]$, $k \in \{0,\cdots,m-1\}$. The proof of the case for $k=0$ is similar to the case of $k+1$ and omitted here for brevity. Now let the statements be valid over $[t_0,t_k]$. Next, we show the validity of the statements at $(t_k,t_{k+1}]$, and as a result the validity of the statement over $[t_0,t_{k+1}]$. Let $\vect{\chi}(t_k)=[\chi_{k,1}^{} \quad \chi_{k,2}^{}]^\top\in\real^2$, $k\in\{0,\cdots,m\}$. Also, given $t\in(t_k,t_{k+1}]$, let $t'=t-t_k \in (0,T_k]$. Disregarding the control bounds,~\eqref{eq::min_en_ctrl} results in $\vect{x}(t_k)=\vect{\chi}(t_{k})$ for $k\in\{0,\cdots,m-1\}$. Therefore, control~\eqref{eq::min_en_ctrl} can also be expressed as
\begin{align*}
    u(t')=\begin{bmatrix}\frac{12}{T_k^3}(T_k-t')-\frac{6}{T_k^2} & -\frac{6}{T_k^2}(T_k-t')+\frac{4}{T_k} \end{bmatrix}\begin{bmatrix}\chi_{k+1,1}^{}-\chi_{k,1}^{}+T_k \chi_{k,2}^{} \\ \chi_{k+1,2}^{}-\chi_{k,2}^{}\end{bmatrix}.
\end{align*}
Since $u(t')$ is an affine function of $t'$, the maximum value of $|u(t')|$ is at either $t'\to 0^
+$ or $t'=T_k$. That is,  $|u(t')|\leq |u(t'\!\to\! 0^+)|$ or $|u(t')|\leq |u(T_k)|$ where $u(t'\!\to\! 0^+)=\lim_{t'\to 0^{+}}u(t')$.  Next, we  show that there always exists a $T_k$ that makes $|u(t'\to 0^+)|\leq|u_{max}|$ and $|u(T_k)|\leq\!|u_{max}|$, which means that $|u(t')|\leq\! |u_{max}|$, $t'\in(0,T_k]$. Note that 
$\left|u(t'\to0^+)\right|=\left|\frac{6}{T_k^2}(\chi_{k+1,1}^{}-\chi_{k,1}^{})-\frac{2}{T_k}(\chi_{k+1,2}^{}-2\chi_{k,2}^{})\right|\leq \left|\frac{6}{T_k^2}(\chi_{k+1,1}^{}-\chi_{k,1}^{})\right|+\left|\frac{2}{T_k}(\chi_{k+1,2}^{}-2\chi_{k,2}^{})\right|$, 
and $\left|u(T_k)\right|=\Big{|}-\frac{6}{T_k^2}(\chi_{k+1,1}^{}-\chi_{k,1}^{})+$ $\frac{2}{T_k}(2\chi_{k+1,2}^{}-\chi_{k,2}^{})\Big{|}\leq \left|\frac{6}{T_k^2}(\chi_{k+1,1}^{}-\chi_{k,1}^{})\right|+\left|\frac{2}{T_k}(2\chi_{k+1,2}^{}-\chi_{k,2}^{})\right|$.
Since the upper bounds established for $\left|u(t'\to0^+)\right|$ and $\left|u(T_k)\right|$ monotonically decrease when $T_k$ increases, there always exists finite value of $T_k$ such that these upper bounds become equal to $u_{max}$. Therefore, there always exists a finite value of $T_k$ for which control law~\eqref{eq::min_en_ctrl} does not violate the controller saturation bound.
 \end{proof}

\end{document}